\author{David Bremner\footnote{Faculty of Computer Science, University of New Brunswick,\newline Fredericton, NB, Canada \newline Email: \{bremner, ra.shahsavari\} @unb.ca} \:\:\:\:\:\ Rasoul Shahsavarifar$^*$ \vspace{.5cm}}
\title{\textbf{Approximate Data Depth Revisited}}
\date{\today}
\theoremstyle{definition}
\newtheorem{theorem}{Theorem}[section]
\newtheorem{conj}[theorem]{Conjecture}
\newtheorem{lemma}[theorem]{Lemma}
\DeclareMathOperator{\SkD}{\textit{SkD}}
\DeclareMathOperator{\HD}{\textit{HD}}
\begin{document}
\thispagestyle{empty}
\maketitle

\begin{abstract}
Halfspace depth and $\beta$-skeleton depth are two types of depth functions in nonparametric data analysis. The halfspace depth of a query point $q\in \mathbb{R}^d$ with respect to $S\subset\mathbb{R}^d$ is the minimum portion of the elements of $S$ which are contained in a halfspace which passes through $q$. For $\beta \geq 1$, the $\beta$-skeleton depth of $q$ with respect to $S$ is defined to be the total number of \emph{$\beta$-skeleton influence regions} that contain $q$, where each of these influence regions is the intersection of two hyperballs obtained from a pair of points in $S$. The $\beta$-skeleton depth introduces a family of depth functions that contain \emph{spherical depth} and \emph{lens depth} if $\beta=1$ and $\beta=2$, respectively. The main results of this paper include approximating the planar halfspace depth and $\beta$-skeleton depth using two different approximation methods. First, the halfspace depth is approximated by the $\beta$-skeleton depth values. For this method, two dissimilarity measures based on the concepts of \emph{fitting function} and \emph{Hamming distance} are defined to train the halfspace depth function by the $\beta$-skeleton depth values obtaining from a given data set. The goodness of this approximation is measured by a function of error values. Secondly, computing the planar $\beta$-skeleton depth is reduced to a combination of some range counting problems. Using existing results on range counting approximations, the planar $\beta$-skeleton depth of a query point is approximated in $O(n\;poly(1/\varepsilon,\log n))$, $\beta\geq 1$. Regarding the $\beta$-skeleton depth functions, it is also proved that this family of depth functions converge when $\beta \to \infty$. Finally, some experimental results are provided to support the proposed method of approximation and convergence of $\beta$-skeleton depth functions. 
\end{abstract}
\paragraph{Keywords:} Data Depth, Halfspace Depth, $\beta$-skeleton Depth, Approximation, $\varepsilon$-approximation, Fitting Functions, Range Query, Partially Ordered Sets.
\section{Introduction}
\label{sec:intro}
Data depth is a method to generalize the concept of rank in the univariate data analysis to the multivariate case. Data depth measures the centrality of a data point with respect to a dataset, and it gives a center-outward ordering of data points. In other words, applying a data depth on a dataset generates a partial ordered set (poset) of the data points. A poset is a set together with a partial ordering relation which is reflexive, antisymmetric and transitive. Over the last decades various notions of data depth such as \emph{halfspace depth} (Hotelling \cite{hotelling1990stability,small1990survey}; Tukey \cite{tukey1975mathematics}), \emph{simplicial depth} (Liu \cite{liu1990notion}) \emph{Oja depth} (Oja \cite{oja1983descriptive}), \emph{regression depth} (Rousseeuw and Hubert \cite{rousseeuw1999regression}), and others have been introduced in the area of non-parametric multivariate data analysis. These depth functions are different in application, definition, complexity of computations.  Among different notions of data depth, we focus on halfspace depth and a recently defined data depth named $\beta$-skeleton depth (Yang and Modarres \cite{yang2017beta}).
\\\\In 1975, Tukey generalized the definition of univariate median and defined the halfspace median as a point in which the halfspace depth is maximized, where the halfspace depth is a multivariate measure of centrality of data points. Halfspace depth is also known as Tukey depth or location depth. In general, the halfspace depth of a query point $q$ with respect to a given data set $S$ is the smallest portion of data points that are contained in a closed halfspace through $q$ \cite{bremner2008output,tukey1975mathematics}. The halfspace depth function has various properties such as vanishing at infinity, affine invariance, and decreasing along rays. These properties are proved in \cite{donoho1992breakdown}. Many different algorithms for the computation of halfspace depth in lower dimensions have been developed elsewhere \cite{bremner2008output,bremner2006primal,chan2004optimal,rousseeuw1998computing}. The bivariate and trivariate case of halfspace depth can be computed exactly in $O(n\log n)$ and $O(n^2 \log n)$ time \cite{rousseeuw1996algorithm,struyf1999halfspace}, respectively. However, computing the halfspace depth of a query point with respect to a data set of size $n$ in dimension $d$ is an NP-hard problem if both $n$ and $d$ are part of the input \cite {johnson1978densest}. Due to the hardness of the problem, designing efficient algorithms to compute (or approximate) the halfspace
depth of a point remains an interesting task in the research area of data depth \cite{afshani2009approximate,aronov2010approximate,chen2013absolute,har2011relative}.
\\\\In 2017, Yang and Modarres introduced a familly of depth functions called $\beta$-skeleton depth, indexed by a single parameter $\beta\geq 1$. The $\beta$-skeleton depth of a query point $q\in \mathbb{R}^d$ with respect to a given data set $S$ is defined as the portion of $\beta$-skeleton influence regions that contain $q$. The influence regions of $\beta$-skeleton depth are the multidimensional generalization of lunes in the definition of the \emph{$\beta$-skeleton graph} \cite{kirkpatrick1985framework}. A notable characteristic of the $\beta$-skeleton depth is related to its time complexity that grows linearly in the dimension $d$ whereas no polynomial algorithms (in the dimension) in higher dimensions are known for most other data depths. To the best of our knowledge, the current best algorithm for computing the $\beta$-skeleton depth in higher dimension $d$ is the straightforward algorithm which takes $\Theta(dn^2)$. The authors, in their previous work \cite{shahsavarifar2018computing}, improved this bound for the planar $\beta$-skeleton depth. They developed an $O(n^{3/2+\epsilon})$~algorithm for all values of $\beta\ge 1$, and a $\theta(n\log n)$ algorithm for the special case of $\beta=1$. Spherical depth (Elmore, Hettmansperger, and Xuan \cite{elmore2006spherical}) and lens depth (Liu and Modarres \cite{liu2011lens} can be obtained from $\beta$-skeleton depth by considering $\beta=1$ and $\beta=2$, respectively. It is proved that the $\beta$-skeleton depth function is monotonic, maximized at the center, and vanishing at infinity. The $\beta$-skeleton depth function is also orthogonally (affinely) invariant if the Euclidean (Mahalanobis) distance is used to construct the $\beta$-skeleton influence regions \cite{elmore2006spherical, liu2011lens,yang2014depth,yang2017beta}.
\\\\The concept of data depth is widely studied by statisticians and computational geometers. Some~directions that have been considered by researchers include defining new depth functions, improving the complexity of computations, computing both exact and approximate depth values, and computing depth functions in higher dimensions. Two surveys by Aloupis \cite{aloupis2006geometric} and Small \cite{small1990survey} can be referred as overviews of data depth from a computational geometer's and a statistician's point of view, respectively.
\\\\In this paper, different methods are presented to approximate the halfspace and $\beta$-skeleton depth functions. Computing the $\beta$-skeleton depth is reduced to a combination of range counting problems. Using different range counting~approximations in \cite{aronov2008approximating,har2011geometric,shaul2011range}, the planar $\beta$-skeleton depth ($\beta\geq 1$) of a given point is approximated  in $O(n\:poly(1/\varepsilon,\log n))$ query time. Furthermore, we propose an approximation technique to approximate the halfspace depth using the $\beta$-skeleton depth. In this method, two dissimilarity measures based on the concepts of \emph{fitting function} and \emph{Hamming distance} are defined to train the halfspace depth function by the $\beta$-skeleton depth values obtaining from a given data set. The goodness of approximation can be measured by the sum of square of error values. We also show that $\beta$-skeleton depth functions converge when $\beta\rightarrow \infty$. Finally, some experimental results are provided regarding our proposed method of approximation.

\section{Halfspace Depth}
\paragraph{Definition:} The halfspace depth of a query point $q\in \mathbb{R}^d$ with respect to a given data set $S=\{x_1,...,x_n\} \subseteq \mathbb{R}^d$ is defined as the minimum portion of points of $S$ contained in any closed halfspace that has $q$ on its boundary. Using the notation of $\HD (q;S)$, the above definition can be presented by~\eqref{eq:halfspacedef}.
\begin{equation}
\label{eq:halfspacedef}
\HD (q;S)= \frac{2}{n}\min \{|S \cap H|: H\in \mathbb{H} , q\in H \},
\end{equation}
where $2/n$ is the normalization factor\footnote{Instead of the normalization factor $1/n$ which is common in literature, we use the normalization factor $2/n$ in order to let the depth of $1$ to be achievable.},  $\mathbb{H}$ is the class of all closed halfspaces in $\mathbb{R}^d$ that pass through $q$, and $|S \cap H|$ denotes the number of points within the intersection of $S$ and $H$. As illustrated in Figure~\ref{fig:halfspace-example}, $\HD(q_1;S)=6/13$ and $\HD(q_1;S)=0$, where $S$ a given set of points in the plane and $q_1,q_2$ are two query points not in $S$.

\begin{figure}[!ht]
  \centering
    \includegraphics[width=0.5\textwidth]{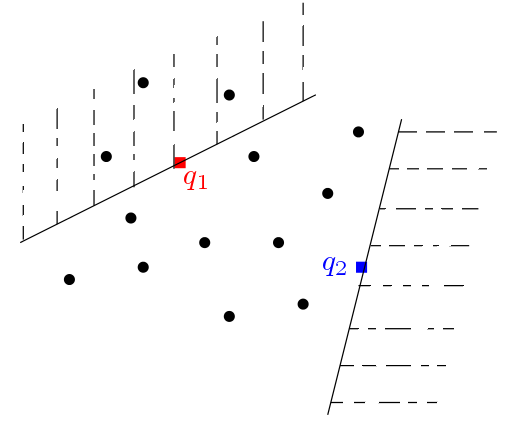}
  \caption{Two examples of halfspace depth in the plane}
  \label{fig:halfspace-example}
\end{figure}
 
\section{$\beta$-skeleton Depth}
\paragraph{Definition:} For $1 \leq \beta \leq \infty$, the $\beta$-skeleton influence region of $x_i$ and $x_j$ ($S_{\beta}(x_i, x_j)$) is defined as follows:
\begin{equation}
\label{eq:beta-influence}
S_{\beta}(x_i, x_j)= B(c_i,r) \cap B(c_j,r),
\end{equation}
where $r=\frac{\beta}{2}\Vert x_i - x_j\Vert$, $c_i=\frac{\beta}{2}x_i + (1-\frac{\beta}{2})x_j$,  and $c_j=(1-\frac{\beta}{2})x_i + \frac{\beta}{2}x_j$.
\\\\In the case of $\beta=\infty$, the $\beta$-skeleton influence region is well defined, and it is a slab defined by two halfspaces. Figure~\ref{fig:betasphericallens} shows the $\beta$-skeleton influence regions for different values of $\beta$.

\begin{figure}[!ht]
  \centering
    \includegraphics[width=0.5\textwidth]{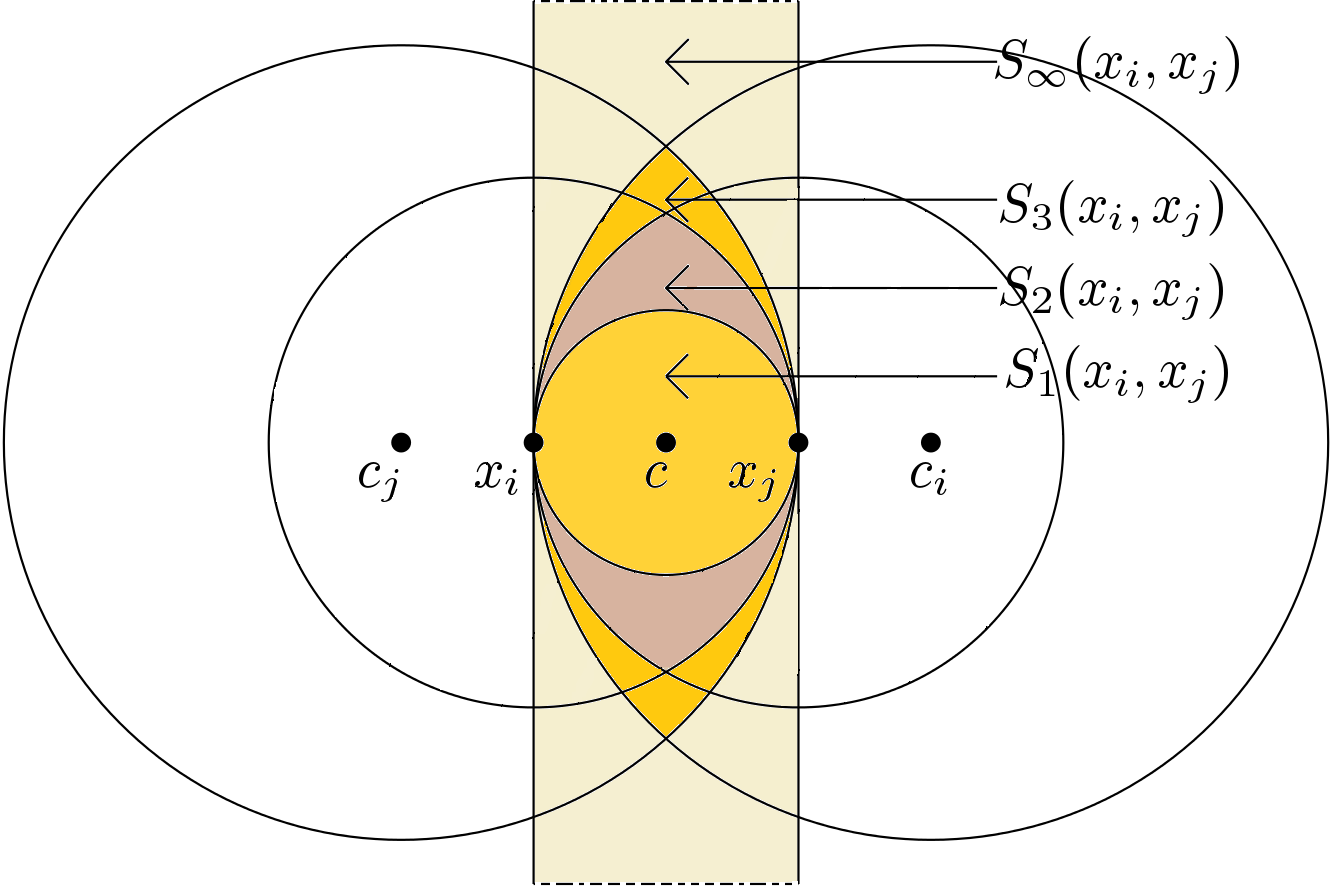}
  \caption{The $\beta$-skeleton influence regions defined by $x_i$ and $x_j$ for $\beta$=1, 2, 3, and $\beta=\infty$, where $c=\frac{x_i+x_j}{2}$, $c_i=\frac{3}{2}x_i+(1-\frac{3}{2})x_j$, and $c_j=(1-\frac{3}{2})x_i+\frac{3}{2}x_j$}
  \label{fig:betasphericallens}
\end{figure}

\paragraph{Definition:}Let $S=\{x_1, ..., x_n\}$ be a set of points in $\mathbb{R}^d$. For the parameter $1\leq \beta\leq \infty$, the $\beta$-skeleton depth of a query point $q \in \mathbb{R}^d$ with respect to $S$, is defined as a proportion of the $\beta$-skeleton influence regions of $S_{\beta}(x_i, x_j), 1\leq i < j \leq n$ that contain $q$. Using the indicator function $I$, this definition can be represented by Equation \eqref{eq:beta}.
\begin{equation}
\label{eq:beta}
\SkD_{\beta}(q;S)= \frac{1}{{n \choose 2}}\sum_{1\leq i<j\leq n} {I(q \in S_{\beta}(x_i, x_j)})
\end{equation}
\\It can be verified that $q \in S_{\beta}(x_i, x_j)$ is equivalent to the inequality of $\frac{\beta}{2}\Vert x_i - x_j \Vert \geq \max \{ \Vert q - c_i \Vert , \Vert q - c_j \Vert\}$, where $\Vert c_i - c_j \Vert=\Vert (1-\beta)(x_i-x_j) \Vert = (\beta - 1)\Vert x_i - x_j \Vert$ for $\beta \geq 1$. The straightforward algorithm for computing the $\beta$-skeleton depth of $q \in \mathbb{R}^d$ takes $\Theta(d n^2)$ time because the above inequality should be checked for all $1\leq i,j\leq n$.
\section{Dissimilarity Measures}
In this section, two different types of dissimilarity measures for depth functions are introduced.
\subsection{Fitting Functions and Dissimilarity Measures}
To determine the dissimilarity between two vectors $U=(u_1 ,..., u_n)$ and $V=(v_1, ..., v_n)$, the idea of fitting functions can be applied. Considering the goodness measures of fitting functions, assume that $f$ is the best function fitted to $U$ and $V$ which means that $u_i=f(v_i)\pm \delta_i$. Let $\xi_i=u_i-\overline{U}$, where $\overline{U}$ is the average of $u_i$ ($1\leq i\leq n$). We define the dissimilarity measure between $U$ and $V$ ($d_E(U,V)$) to be a function of $\delta_i$ and $\xi_i$ as follows:

\begin{equation}
\label{eq:e-distance}
d_E(U,V)=1-r^2,
\end{equation}
\\where $r^2$ is the coefficient of determination \cite{shafer2012beginning} which is defined by:
\begin{equation}
\label{eq:r-squared}
r^2=\dfrac{\sum\limits_{i=1}^{n}(\xi_i^2-\delta_i^2)}{\sum\limits_{i=1}^{n}\xi_i^2}.
\end{equation}  
\\Since $r^2\in[0,1]$, $d_E(U,V)\in [0,1]$. A smaller value of $d_E(U,V)$ represents more similarity between $U$ and~$V$. 
        
\paragraph{Definition:} Let $S=\{x_1,...,x_n\}$ be a finite set. It is said that $P=(S,\preceq)$ is a partially ordered set (poset) if $\preceq$ is a partial order relation on $S$, that is, for all $x_i,x_j,x_t \in S$: $(a)\; x_i \preceq x_i $; $(b)\; x_i \preceq x_j$ and $x_j \preceq x_t$ implies that $x_i \preceq x_t$; $(c)\; x_i \preceq x_j$ and $x_j \preceq x_i$ implies that $x_i \equiv_p x_j$, where $\equiv_p$ is the corresponding equivalency relation.
\\Poset $P=(S,\preceq)$ is called a chain if any two elements of $S$ are comparable, i.e., given $x_i ,x_j\in S$, either $x_i \preceq x_j $ or $x_j \preceq x_i $. If there is no comparable pair among the elements of $S$, the corresponding post is an anti chain. Figure~\ref{fig:poset} illustrates different posets with the same elements.

\begin{figure}[!ht]
  \centering
    \includegraphics[width=0.5\textwidth]{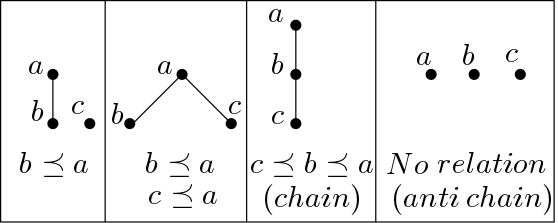}
  \caption{Different posets and relations among their elements}
  \label{fig:poset}
\end{figure}

\subsection{Dissimilarity Measures Between two Posets}
The idea of defining the following distance comes from the proposed structural dissimilarity measure between posets in \cite{fattore2014measuring}. Let $\mathbb{P}=\{P_t=(S,\preceq_t)\vert t\in \mathbb{N}\}$ be a finite set of posets, where $S=\{x_1, ..., x_n\}$. For $P_k\in\mathbb{P}$ we define a matrix $M^k_{n\times n}$ by:
\[
M^k_{ij}= \begin{cases}
1\;\;\;\;\;x_i \preceq_k x_j \\
0\;\;\;\;\; \text{otherwise.}
\end{cases}
\]
We use the notation of $d_c(P_f,P_g)$ to define a dissimilarity between two posets $P_f,P_g\in \mathbb{P}$ as follows:

\begin{equation}
\label{eq:h-distance}
d_c(P_f,P_g)=\dfrac{\sum\limits_{i=1}^{n}\sum\limits_{j=1}^{n} \vert M^f_{ij}-M^g_{ij} \vert}{n^2-n}
\end{equation}
It can be verified that $d_c(P_f,P_g)\in [0,1]$, where the closer value to $1$ means the less similarity between $P_f$ and $P_g$. This measure of similarity is a metric on $\mathbb{P}$ because for all $P_f,P_g,P_h \in \mathbb{P}$,
\begin{itemize}
\item $d_c(P_f,P_g)\geq 0$
\item $d_c(P_f,P_g)=0 \Leftrightarrow P_f=P_g$
\item $d_c(P_f,P_g)=d_c(P_g,P_f)$
\item $d_c(P_f,P_h)\leq d_c(P_f,P_g)+ d_c(P_g,P_h).$
\end{itemize}
Proving these properties is straightforward. The proof of last property which is less trivial can be found in Appendix (see Lemma \ref{lm:metric-property}).

\section{Approximation of Halfspace Depth}
\label{sec:approx-half-space}
Due to the difficulty of computing the halfspace depth in dimension $d>3$, many authors have tried to approximate the halfspace depth. We propose a method to approximate the halfspace depth using another depth function. Among all depth functions, the $\beta-$skeleton depth is chosen to approximate the halfspace depth because it is easy to compute and its time complexity, i.e. $\Theta(dn^2)$, grows linearly in higher dimension $d$. 

\subsection{Approximation of Halfspace Depth and Fitting Function}
Suppose that $S=\{x_1, ...,x_n\}$ is a set of data points. By choosing some subsets of $S$ as training samples, we consider the problem of learning the halfspace depth function using the $\beta-$skeleton depth values. Finally, by applying the cross validation techniques in machine learning, the best function $f$ can be obtained such that $\HD(x_i;S)=f(\SkD_{\beta}(x_i;S))\pm \delta_i$. The function $f$ can be considered as an approximation function for halfspace depth, where the value of $d_E(\HD,\SkD_{\beta})$ that can be computed using Equation \eqref{eq:e-distance} is the error of approximation.

\subsection{Approximation of Halfspace Depth and Poset Dissimilarity}
In some applications, the structural ranking among the elements of $S$ is more important than the depth value of single points. Let $S=\{x_1, ..., x_n\}$ be a set of points and $D$ be a depth function. Applying $D$ on $x_i$ with respect to $S$ generates a poset (in particular, a chain). In fact, $P_D=(D(x_i;S),\leq)$ is a chain because for every $x_i, x_j \in S$, the values of $D(x_i;S)$ and $D(x_j;S)$ are comparable. For halfspace depth and $\beta$-skeleton depth, their dissimilarity measure of rankings can be obtained by Equation \eqref{eq:h-distance} as follows:
\[
d_c(\HD,\SkD_{\beta})=\dfrac{\sum\limits_{i=1}^{n}\sum\limits_{j=1}^{n} \vert M^{\HD}_{ij}-M^{\SkD_{\beta}}_{ij} \vert}{n^2-n}.
\]
The smaller value of $d_c(\HD,\SkD_{\beta})$, the more similarity between $\HD$ and $\SkD_{\beta}$ in ordering the elements of $S$.
\\\\ In both of the above approximation methods, any other depth function can be considered instead of $\beta$-skeleton depth to approximate the halfspace depth.    

\begin{conj}
\label{conj:1}
For two depth functions $D_1$ and $D_2$, the small value of $d_c(D_1,D_2)$ implies the small value of $d_E(D_1,D_2)$ and vice versa. 
\end{conj}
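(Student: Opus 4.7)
The plan is to split the conjectured equivalence into its two implications and handle them separately, after first noting that the statement must be interpreted with some care: the definition of $d_E$ through a fitting function $f$ requires a restriction on the admissible class of $f$, for otherwise a sufficiently flexible $f$ can interpolate arbitrary scatterplots and drive $d_E$ close to zero regardless of ordinal agreement. I would therefore work inside the natural class of monotonically nondecreasing fits (best monotone regression), which respects both the center-outward ordering intrinsic to depth functions and the roughly monotonic scatterplots suggested by the experimental portions of the paper.

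For the forward direction, namely that $d_c(D_1,D_2)$ being small implies $d_E(D_1,D_2)$ is small, I would normalize the two depth vectors $U=(D_1(x_i))_i$ and $V=(D_2(x_i))_i$, sort the indices by $V$-coordinate, and let $f$ be the isotonic (pool-adjacent-violators) regression of $U$ against $V$. Each pair $(i,j)$ whose orderings agree under $D_1$ and $D_2$ is compatible with such a monotonic fit, so the nonzero residuals $\delta_i$ can be charged only to pairs whose orderings disagree; the total number of such disagreeing pairs equals $\tfrac{1}{2}(n^2-n)\,d_c(P_{D_1},P_{D_2})$ by the definition in \eqref{eq:h-distance}. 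A standard accounting of violators in isotonic regression then bounds $\sum_i \delta_i^2$ by a constant times $d_c(P_{D_1},P_{D_2})\cdot\sum_i \xi_i^2$, which is exactly the bound on $d_E$ required by \eqref{eq:e-distance} and \eqref{eq:r-squared}.

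For the reverse direction, suppose that $d_E(D_1,D_2)$ is small and write $U_i = f(V_i)+\delta_i$ with $f$ monotonically nondecreasing. An ordering disagreement on a pair $(i,j)$ with $V_i<V_j$ forces $U_i>U_j$, hence $\delta_j-\delta_i \le -(f(V_j)-f(V_i)) \le 0$ and so $|\delta_i|+|\delta_j| \ge f(V_j)-f(V_i)$. Summing this lower bound over all disagreeing pairs and then applying a Markov-type inequality to the collection of squared residuals, whose total is controlled by $(1-r^2)\sum_i \xi_i^2 = d_E(D_1,D_2)\sum_i \xi_i^2$, bounds the number of disagreeing pairs; dividing by $n^2-n$ yields the desired bound on $d_c$.

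The main obstacle will be the reverse direction. Inversions produced by small residuals can be numerous when many depth values lie close together, so the bound is expected to depend on a separation parameter, for instance a lower bound on the gap between consecutive distinct depth values, or a hypothesis that the empirical distribution of depth values is not too concentrated. The cleanest precise form of the conjecture will likely carry such an additional hypothesis, and removing it to obtain a fully distribution-free equivalence is where I expect the real difficulty to lie; this may force a weakening of the qualitative ``small implies small'' phrasing into an explicit inequality involving the spacing statistics of the sample depth values.
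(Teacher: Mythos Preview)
The paper does not prove this statement: it is explicitly labeled as a \emph{conjecture} and is left open, with no proof or even a sketch provided. There is therefore nothing in the paper to compare your proposal against.

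On the substance of your attempt, note that you are not proving the conjecture as stated but a modified version of it. The paper's $d_E$ is defined through the \emph{best} fitting function $f$, with no monotonicity restriction, so your decision to work only inside the class of nondecreasing fits changes the quantity being bounded. As you yourself observe, without such a restriction a sufficiently flexible $f$ makes $d_E$ essentially zero for arbitrary data, so the ``vice versa'' direction of the conjecture is false as literally written; this is a genuine observation about the conjecture, not a proof of it. Even within your restricted setting, the forward step ``a standard accounting of violators in isotonic regression then bounds $\sum_i \delta_i^2$ by a constant times $d_c\cdot\sum_i\xi_i^2$'' is asserted rather than argued: a single inverted pair can force an arbitrarily large residual if the corresponding $U$-values are far apart, so the number of inversions alone cannot control $\sum_i\delta_i^2$ without an additional hypothesis bounding the \emph{magnitudes} of the depth values involved in inversions. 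Your reverse direction already acknowledges the symmetric difficulty. In short, what you have is a reasonable heuristic outline together with a correct diagnosis that the conjecture, taken literally, needs extra hypotheses; it is not a proof, and the paper does not claim one.
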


\section{Approximation of $\beta$-skeleton Depth}
The convergence of $\beta$-skeleton depth functions when $\beta\to\infty$ is investigated in this section. Furthermore, the problem of planar $\beta$-skeleton depth is reduced to a combination of disk and halfspace range counting problems. This reduction is applied to approximate the planar $\beta$-skeleton depth using the range counting approximation.
\subsection{Convergence of $\beta$-skeleton Depth Functions}
The following theorem helps understand the definition of $\SkD_{\infty}$ in Equation \eqref{eq:beta}.
\begin{theorem}
\label{thrm:beta-converge}
If $\beta\to \infty$, all $\beta$-skeleton depth functions converge to $\SkD_{\infty}$. In other words, for data set $S=~\{x_1, ..., x_n\}$ and query point $q$,
\begin{equation}
\label{eq:beta-converge}
\lim_{\beta \to \infty}\frac{\SkD_{\beta}(q;S)}{\SkD_{(\beta+1)}(q;S)}=1.
\end{equation}
\end{theorem}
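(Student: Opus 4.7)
The plan is to show that as $\beta$ grows, the set of pairs $(i,j)$ whose $\beta$-skeleton influence region contains $q$ is non-decreasing in $\beta$; since the count of such pairs takes only integer values in $\{0,1,\ldots,\binom{n}{2}\}$, the sequence $\SkD_\beta(q;S)$ must stabilize, after which the ratio in \eqref{eq:beta-converge} is trivially $1$.

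First I would obtain a clean $\beta$-dependent form of the membership condition. Writing $c_i = x_j + \tfrac{\beta}{2}(x_i-x_j)$ and $r = \tfrac{\beta}{2}\|x_i-x_j\|$, and squaring $\|q-c_i\|\le r$, the $\beta^2$ terms cancel and one obtains
\[
\|q-x_j\|^2 \;\le\; \beta\,(q-x_j)\cdot(x_i-x_j).
\]
By the symmetric calculation, $\|q-c_j\|\le r$ becomes
\[
\|q-x_i\|^2 \;\le\; \beta\,(q-x_i)\cdot(x_j-x_i).
\]
So $q \in S_\beta(x_i,x_j)$ is equivalent to this pair of inequalities, in which $\beta$ appears only linearly on the right.

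Second, monotonicity in $\beta$ follows at once from this form. If $q \notin \{x_i,x_j\}$ and either dot product is non-positive, both inequalities fail for every $\beta\ge 1$; otherwise both right-hand sides are strictly positive and linear in $\beta$, while the left-hand sides are constant, so if the pair of inequalities holds at some $\beta_0\ge 1$ it holds for every $\beta\ge \beta_0$. Hence $S_\beta(x_i,x_j)\subseteq S_{\beta'}(x_i,x_j)$ whenever $1\le\beta\le\beta'$, and summing the indicators in \eqref{eq:beta} shows that $\SkD_\beta(q;S)$ is non-decreasing in $\beta$.

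Finally, $\SkD_\beta(q;S)$ lies in the finite set $\{k/\binom{n}{2} : 0\le k\le \binom{n}{2}\}$, and a bounded, monotone, discretely-valued function of $\beta$ must be eventually constant. Let $L$ denote this eventual value. If $L>0$, then both numerator and denominator in \eqref{eq:beta-converge} equal $L$ for all sufficiently large $\beta$, so the limit equals $1$. The main obstacle I anticipate is the degenerate case $L=0$, where the ratio takes the indeterminate form $0/0$: this occurs precisely when $q$ fails to lie in the open slab bounded by the perpendicular hyperplanes at $x_i$ and $x_j$ for every pair, so $\SkD_\infty(q;S)=0$ as well. This case must be handled either by convention (interpreting $0/0$ as $1$) or by restricting the statement to query points for which $\SkD_\infty(q;S)>0$.
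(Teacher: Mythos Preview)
Your argument is correct and takes a genuinely different route from the paper. The paper establishes the same nesting $S_\beta(x_i,x_j)\subseteq S_{\beta+1}(x_i,x_j)$ (its Lemma~\ref{lm:beta-influence-containment}), but then proceeds \emph{geometrically}: it writes down an explicit trigonometric formula for the lune area $A(S_\beta(x_i,x_j))$ and shows that the ratio $A(S_\beta(x_i,x_j))/A(S_{\beta+1}(x_i,x_j))\to 1$, arguing that nesting together with area-ratio $1$ forces the influence regions, and hence the depth values, to coincide in the limit. Your approach sidesteps the area computation entirely: once $\SkD_\beta(q;S)$ is known to be non-decreasing in $\beta$ and to take values in the fixed finite set $\{k/\binom{n}{2}:0\le k\le\binom{n}{2}\}$, eventual constancy (and hence the limit $1$) is immediate. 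This is both more elementary and more robust---in particular it works verbatim in $\mathbb{R}^d$, whereas the paper's area formula is planar. One small wording slip: when one of the two dot products is non-positive it is not true that \emph{both} inequalities fail (indeed $(q-x_j)\cdot(x_i-x_j)\le 0$ forces $(q-x_i)\cdot(x_j-x_i)\ge\|x_i-x_j\|^2>0$), but since membership requires both inequalities, the failure of one suffices and your conclusion is unaffected. You are also right to isolate the degenerate case $L=0$, which yields $0/0$; the paper's proof does not discuss it.
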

\begin{proof}
Referring to \eqref{eq:beta}, the definition of $\beta$-skeleton depth, it is enough to prove that if $\beta \to \infty$,

\begin{equation}
\label{eq:influence-converge}
\forall x_i,x_j\in S;\; S_{\beta}(x_i,x_j)=S_{(\beta+1)}(x_i,x_j).
\end{equation}
It is proved that $S_\beta(x_i,x_j)\subseteq S_{(\beta+1)}(x_i,x_j)$, and $u_{\beta}=u_{(\beta+1)}$ if $\beta\to \infty$ (see Lemma \ref{lm:beta-influence-containment} and Lemma \ref{lm:lim-delta} in Appendix). Since the influence regions of $\beta$-skeleton depth functions are closed and convex, the proof is complete if
\begin{equation}
\label{eq:influence-area-limit}
\lim_{\beta \to \infty}\frac{A(S_{\beta i j})}{A(S_{(\beta+1) i j})}=1,
\end{equation}
where $A(S_{\beta i j})$ is the area of $S_{\beta}(x_i,x_j)$. It can be verified that $A(S_{\beta i j})$ is equal to $\theta r^2-da$, where $d=(\beta-1)l$, $a=(\l/2)\sqrt{(2\beta-1)}$, $r=\beta l/2$, $\theta=\cos^{-1}((\beta-1)/\beta)$, and $l=d(x_i,x_j)$. See Figure \ref{fig:beta-converge}.

\begin{figure}[!ht]
  \centering
    \includegraphics[width=0.5\textwidth]{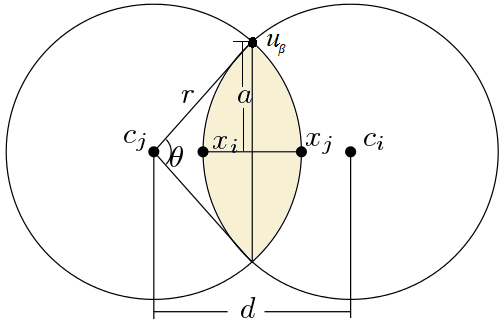}
  \caption{An illustration of $d$, $a$, $r$, $l$, and $\theta$.}
  \label{fig:beta-converge}
\end{figure}
\[
\frac{A(S_{\beta i j})}{A(S_{(\beta+1) i j})}=\frac{\beta^2\cos^{-1}[(\beta-1)/\beta]-(\beta-1)\sqrt{2\beta-1}}{(\beta+1)^2\cos^{-1}[\beta/(\beta+1)]-\beta\sqrt{2\beta+1}}
\]
Equation \eqref{eq:influence-area-limit} is proved because 
\[
\lim_{\beta \to \infty}\cos^{-1}(\frac{\beta-1}{\beta})=\lim_{\beta \to \infty}\cos^{-1}(\frac{\beta}{\beta+1})=0.
\]
\end{proof}
\subsection{$\varepsilon-$approximation of Planar $\beta$-skeleton Depth}
In this section, we prove that for $S=\{x_1,...,x_n\}\subset\mathbb{R}^2$ and $q\in \mathbb{R}^2$, computing $\SkD_\beta(q;S)$ is equivalent with at most $3n$ semialgebraic range counting problems. This result can be applied to approximate the planar $\beta$-skeleton depth in $O(n\:poly(1/\varepsilon,\log n))$ query time with $O(n\:poly(1/\varepsilon,\log n))$ storage and the same bound for preprocessing time.  
\\\\Given a set $S=\{x_1, ..., x_n\}\subset \mathbb{R}^2$, a semialgebraic range $\tau$ with constant description complexity, and a parameter $\varepsilon >0$, $S$ can be preprocessed into a data structure that helps to efficiently compute an approximated count $\eta_\tau$ which satisfies the \textit{$(\varepsilon,\tau)$-approximation} given by:
\begin{equation*}
(1-\varepsilon) \vert S \cap \tau \vert\leq \eta_\tau \leq (1+\varepsilon) \vert S \cap \tau \vert.
\end{equation*}
Considering $S$ and $\varepsilon$ as introduced above, a short list of latest results for approximation of some semialgebraic range counting problems in $\mathbb{R}^2$ is as follows.  
\begin{itemize}
\item \textbf{Halfspace:}
With $O(n\:poly(1/\varepsilon,\log n))$ preprocessing time, one can construct a data structure of size $O(n\:poly(1/\varepsilon,\log n))$ such that for a given halfspace $\hbar$, it outputs a number $\eta_\hbar$ that satisfies the $(\varepsilon,\hbar)$-approximation with the query time $O(poly(1/\varepsilon,\log n))$. For points in $\mathbb{R}^3$, the same results can be obtained \cite{har2011geometric}.
\item \textbf{Disk:} By standard lifting of planar  points to the paraboloid in $\mathbb{R}^3$, one can reduce a disk range query in the plane to a halfspace range query in three dimensions \cite{har2011geometric}. 
\item \textbf{Circular cap:} A circular cap is the larger part of a circle cut by a line. In near linear time, $S$ can be preprocessed into a linear size data structure that returns the emptiness result of any circular cap query in $O(poly \log n)$. With $O(\varepsilon^{-2} T(n)\log n)$ preprocessing time and $O(\varepsilon^{-2} S(n)\log n)$ storage, one can construct a data structure such that for a given circular cap $\rho$, it returns in $O(\varepsilon^{-2}Q(n)\log n)$ time, a number $\eta_\rho$ that satisfies the $(\varepsilon,\rho)$-approximation. Both of $T(n)$ and $S(n)$ are near linear and $Q(n)$ is the emptiness query of $\rho$ \cite{shaul2011range}.
\end{itemize}

\paragraph{Definition:}
For an arbitrary non-zero point $a \in \mathbb{R}^2$  and parameter $\beta \geq 1$, $\ell(p)$ is a line that is perpendicular to $\overrightarrow{a}$ at the point $p=p(a,\beta)={(\beta -1)a}/{\beta}$. This line forms two halfspaces $H_o(p)$ and $H_a(p)$. The one that includes the origin is $H_o(p)$ and the other one that includes $a$ is $H_a(p)$.

\paragraph{Definition:}For a disk $B(c,r)$ with the center $c=c(a,\beta)={\beta a}/{(2(\beta -1))}$ and radius $r=\Vert c \Vert$, $B_o(c,r)$ is the intersection of $H_o(p)$ and $B(c,r)$,  and $B_a(c,r)$ is the intersection of $H_a(p)$ and $B(c,r)$, where $\beta>1$ and $a$ is an arbitrary non-zero point in $\mathbb{R}^2$.
\\\\Figure \ref{fig:halfspace-balls} is an illustration of these definitions for different values of parameter $\beta$.

\begin{figure*}[!ht]
  \centering
    \includegraphics[width=0.78\textwidth]{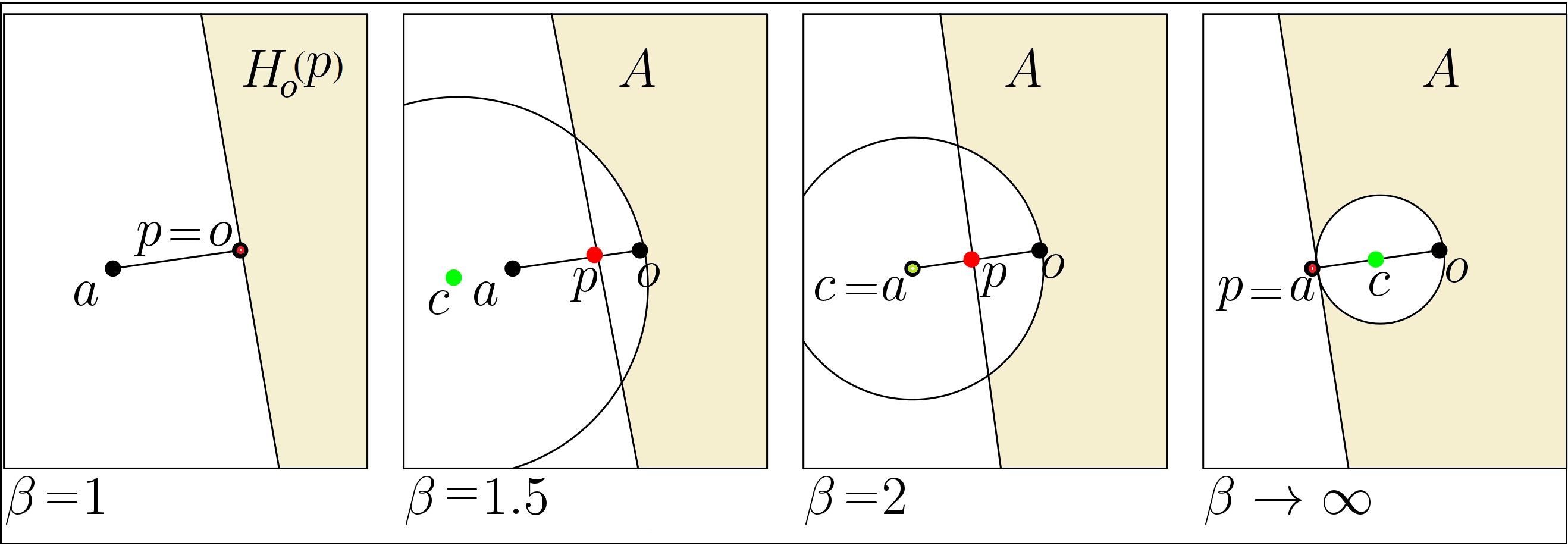}
  \caption{The $H_o(p)$ and $B(c,r)$ defined by $a\in \mathbb{R}^2$ for $\beta=1,\;1.5,\;2,\;\text{and}\;\beta\rightarrow\infty$, where $A=H_o(p)\setminus \{intB_o(c,r)\}$}
\label{fig:halfspace-balls}
\end{figure*}

\begin{theorem}
For arbitrary non-zero points $a$, $b$ in $\mathbb{R}^2$ and parameter $\beta>1$, $b \in H_o(p)\setminus \{int B_o(c,r)\}$ if and only if the origin $O=(0,0)$ is contained in $S_{\beta}(a,b)$, where $c={\beta a}/{(2(\beta -1))}$, $r=\Vert c \Vert$, and $p={(\beta -1)a}/{\beta}$.
\label{thrm:q-skeleton}
\end{theorem}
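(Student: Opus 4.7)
The plan is a direct algebraic reduction of the condition $O\in S_\beta(a,b)$ to the two inequalities that define, respectively, the halfspace $H_o(p)$ and the disk $B(c,r)$. First, I would invoke the equivalent formulation stated just after Equation~\eqref{eq:beta}: $O\in S_\beta(a,b)$ iff
\[
\tfrac{\beta}{2}\Vert a-b\Vert \ \ge\ \max\{\Vert c_i\Vert,\,\Vert c_j\Vert\},
\]
where $c_i=\tfrac{\beta}{2}a+(1-\tfrac{\beta}{2})b$ and $c_j=(1-\tfrac{\beta}{2})a+\tfrac{\beta}{2}b$. Squaring each of the two resulting inequalities and expanding using the dot product, the $\Vert a\Vert^2$ terms cancel, and after collecting like terms one obtains the pair
\[
a\cdot b\ \le\ \tfrac{\beta-1}{\beta}\Vert b\Vert^2\qquad\text{and}\qquad a\cdot b\ \le\ \tfrac{\beta-1}{\beta}\Vert a\Vert^2,
\]
the second being the first with $a$ and $b$ swapped.

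The second part of the plan is to recognise each of these two inequalities as a familiar geometric condition. Since $\ell(p)$ passes through $p=\tfrac{\beta-1}{\beta}a$ perpendicular to $\overrightarrow{a}$, it admits the analytic description $\{y:a\cdot y=\tfrac{\beta-1}{\beta}\Vert a\Vert^2\}$; the origin satisfies $a\cdot O=0<\tfrac{\beta-1}{\beta}\Vert a\Vert^2$ (here I use $\beta>1$ and $a\neq 0$), so $H_o(p)$ is exactly $\{y:a\cdot y\le\tfrac{\beta-1}{\beta}\Vert a\Vert^2\}$, which matches the second inequality with $y=b$. For the disk, $c=\tfrac{\beta a}{2(\beta-1)}$ and $r=\Vert c\Vert$ imply $\partial B(c,r)$ passes through the origin, and expanding $\Vert y-c\Vert^2\le\Vert c\Vert^2$ reduces to $\tfrac{\beta-1}{\beta}\Vert y\Vert^2\le a\cdot y$. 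Negating and setting $y=b$, the condition $b\notin\mathrm{int}\,B(c,r)$ is exactly the first inequality above.

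Combining the two identifications yields $O\in S_\beta(a,b)$ iff $b\in H_o(p)$ and $b\notin\mathrm{int}\,B(c,r)$. To recover the form stated in the theorem, I would invoke $B_o(c,r)=H_o(p)\cap B(c,r)$ and argue that, inside $H_o(p)$, the conditions ``$b\notin\mathrm{int}\,B(c,r)$'' and ``$b\notin\mathrm{int}\,B_o(c,r)$'' coincide, so the iff can be rewritten as $b\in H_o(p)\setminus\mathrm{int}\,B_o(c,r)$. The main obstacle I foresee is not the algebra but precisely this last bookkeeping step: one must be careful about the measure-zero locus $\ell(p)\cap\mathrm{int}\,B(c,r)$, where the two formulations of ``interior'' differ, and verify that the intended reading of $\mathrm{int}\,B_o(c,r)$ is the one compatible with the algebraic equivalence. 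Once that identification is pinned down, the squared-norm manipulations and the geometric recognitions of $H_o(p)$ and $B(c,r)$ give the theorem directly.
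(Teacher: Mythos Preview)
Your proposal is correct and follows essentially the same route as the paper: both square the inequality $\tfrac{\beta}{2}\Vert a-b\Vert\ge\max\{\Vert c_a\Vert,\Vert c_b\Vert\}$, reduce it to the pair $a\cdot b\le\tfrac{\beta-1}{\beta}\Vert a\Vert^{2}$ and $a\cdot b\le\tfrac{\beta-1}{\beta}\Vert b\Vert^{2}$, and then identify these geometrically as the halfspace $H_o(p)$ and the complement of $\mathrm{int}\,B(c,r)$. The only extra step in the paper is a preliminary verification that $d(c,\ell(p))\le r$, so that $B_o(c,r)$ is well-defined; the boundary issue you flag concerning $\ell(p)\cap\mathrm{int}\,B(c,r)$ is not addressed there either.
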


\begin{proof} 
First, we show that $B_o(c,r)$ is a well-defined set meaning that $\ell(p)$ intersects $B(c,r)$. We compute $d(c,\ell(p))$, the distance of $c$ from $\ell(p)$, and prove that this value is not greater than $r$. It can be verified that $d(c,\ell(p))= d(c,p)$. Let $k={\beta}/{(2(\beta -1))}$; the following calculations complete this part of the proof.
\begin{align*}
d(c,p)&=d(\frac{\beta a}{2(\beta -1)},\frac{(\beta -1)a}{\beta})= d(ka,\frac{1}{2k}a)\\&=(k-\frac{1}{2k})\sqrt{({a_x}^2+{a_y}^2)}=(\frac{2k^2-1}{2k})\Vert a \Vert\\& \leq \frac{2k^2}{2k}\Vert a \Vert= k\Vert a \Vert = r
\end{align*}
\\We recall the definition of $\beta$-influence region given by $S_{\beta}(a,b)=B(c_a,\frac{\beta}{2}\Vert a-b \Vert) \cap B(c_b,\frac{\beta}{2}\Vert a-b \Vert)$, where $c_a=\frac{\beta}{2}a +(1-\frac{\beta}{2})b$ and $c_b=\frac{\beta}{2}b +(1-\frac{\beta}{2})a$. Using this definition, following equivalencies can be derived from $O \in S_{\beta}(a,b)$.
\[O \in S_{\beta}(a,b) \Leftrightarrow \frac{\beta \Vert a-b \Vert}{2} \geq max\{\Vert c_a \Vert , \Vert c_b \Vert\}\Leftrightarrow\]
\[\beta \Vert a-b \Vert \geq max\{\Vert \beta(a-b)+2b\Vert , \Vert \beta(b-a)+2a \Vert\} \Leftrightarrow\]
\[\beta ^2 \Vert a-b \Vert ^2 \geq max\{\Vert \beta(a-b)+2b\Vert ^2 , \Vert \beta(b-a)+2a \Vert ^2\} \]
\[\Leftrightarrow 0 \geq max\{b^2(1-\beta)+\beta \overrightarrow{a}.\overrightarrow{b} , a^2(1-\beta)+\beta \overrightarrow{a}.\overrightarrow{b}\}.\]
\\By solving these inequalities for $(\beta -1)/\beta$ which is equal to $1/2k$, we have:  
\begin{equation}
\label{eq:equivalencies}
\frac{1}{2k} \geq max \left\{ \frac{\overrightarrow{a}.\overrightarrow{b}}{\Vert a \Vert ^2}, \frac{\overrightarrow{a}.\overrightarrow{b}}{\Vert b \Vert ^2} \right\}.
\end{equation}
For a fixed point $a$, the inequalities in Equation \eqref{eq:equivalencies} determine one halfspace and one disk given by \eqref{eq:halfspace} and \eqref{eq:disk}, respectively.
\begin{equation}
\label{eq:halfspace}
\frac{1}{2k} \geq \frac{\overrightarrow{a}.\overrightarrow{b}}{\Vert a \Vert ^2} \Leftrightarrow \overrightarrow{a}.\overrightarrow{b} \leq\frac{1}{2k} \Vert a \Vert ^2.
\end{equation}
\begin{equation}
\label{eq:disk}
\begin{split}
&\frac{1}{2k} \geq \frac{\overrightarrow{a}.\overrightarrow{b}}{\Vert b \Vert ^2} \Leftrightarrow  b^2-2k\overrightarrow{a}.\overrightarrow{b} \geq 0 \Leftrightarrow 
b^2-2k\overrightarrow{a}.\overrightarrow{b}+k^2a^2 \geq k^2a^2  \Leftrightarrow  \left( b- ka\right)^2 \geq \left( k \Vert a \Vert\right)^2.
\end{split}
\end{equation}
The proof is complete because for a point $a$, the set of all points $b$ containing in the feasible region defined by Equations \eqref{eq:halfspace} and \eqref{eq:disk} is equal to \\$H_o(p)\setminus \{intB_o(c,r)\}$.  
\end{proof}
\paragraph{Note~1:} It can be verified that for the value of $\beta=2+\sqrt{2}$, the given halfspace in \eqref{eq:halfspace} passes through the center of the given disk in \eqref{eq:disk}. See Lemma \ref{lm:halfspace-circle-center} in Appendix.
\\\\For $S=\{x_1, ..., x_n\}\subset \mathbb{R}^2$ and $x_i\in S$, let $\hbar(x_i)$ be the given halfspace by \eqref{eq:halfspace}, and $B(x_i)$ be the given disk by \eqref{eq:disk}. From Theorem \ref{thrm:q-skeleton}, it can be deduced that we need to do  
\begin{itemize}

\item $n$ halfspace range counting approximations to approximate $\SkD_1(q;S)$ because
\begin{equation*}
\SkD_1(q;S)=\frac{1}{2}\sum_{i=1}^n\vert \hbar(x_i) \vert.
\end{equation*}

\item $n$ halfspace and $n$ disk range counting approximations to approximate $\SkD_{\infty}(q;S)$ because
\begin{equation*}
\SkD_{\infty}(q;S)=\frac{1}{2}\sum_{i=1}^n(\vert \hbar(x_i) \vert-\vert B(x_i) \vert).
\end{equation*}

\item $n$ halfspace, $n$ disk, and $n$ circular cap range counting approximations to approximate $\SkD_{\beta}(q;S)$, for $1<\beta <2+\sqrt{2}$, because
\begin{equation*}
\SkD_{\beta}(q;S)=\frac{1}{2}\sum_{i=1}^n(\vert \hbar(x_i) \vert-\vert B(x_i) \vert + \vert \rho(x_i) \vert),
\end{equation*}
 where $\rho(x_i)$ is the circular cap obtained from $b(x_i)$ cut by $\hbar(x_i)$.
 
\item $n$ halfspace and $n$ circular cap range counting approximations to approximate $\SkD_{\beta}(q;S)$, for $2+~\sqrt{2}\leq\beta <\infty$, because
\begin{equation*}
\SkD_{\beta}(q;S)=\frac{1}{2}\sum_{i=1}^n(\vert \hbar(x_i) \vert- \vert \rho(x_i) \vert).
\end{equation*}

\end{itemize}

\section{Experimental Results}
In this section some experimental results are provided to support Section \ref{sec:approx-half-space} and Theorem \ref{thrm:beta-converge}. We compute the planar halfspace depth and planar $\beta$-skeleton depth of $q\in Q$ with respect to $S$ for different values of $\beta$, where $Q$ with the size of $1000$ and $S$ with the size of $2500$ are two sets of randomly generated points (double precision floating points) within the square $\{(x,y)| x,y \in [-10,10]\}$. The results of our experiments are summarized in Table~\ref{tbl:results-fitting-function} and its corresponding figures provided in Appendix. The columns $2-4$ of Table \ref{tbl:results-fitting-function} show that the halfspace depth can be approximated by a quadratic function of the $\beta$-skeleton depth with a relatively small value of $d_E(\SkD,\HD)\approxeq 0.003$. In particular, $\HD\approxeq4.3(\SkD_{\beta})^2-2.8(\SkD_{\beta})+0.47$ if $\beta\to \infty$. Columns $5-7$ include the experimental results to support Theorem \ref{thrm:beta-converge}.

\section{Acknowledgment}
The authors would like to thank Joseph O'Rourke and other attendees of CCCG2017 for inspiring discussions regarding the relationships between depth functions and posets. 



\newpage
\section{Appendix}
\begin{lemma}
\label{lm:metric-property}
For posets $A,B$, $C$ in $\mathbb{P}=\{P_t=(S,\preceq_t)\vert t\in \mathbb{N}\}$, $d_c(A,B)\leq d_c(A,C)+ d_c(C,B),$ where 
\[
d_c(A,B)=\dfrac{\sum\limits_{i=1}^{n}\sum\limits_{j=1}^{n} \vert M^A_{ij}-M^B_{ij} \vert}{n^2-n}
\]
and \[
M^k_{ij}= \begin{cases}
1\;\;\;\;\;x_i \preceq_k x_j \\
0\;\;\;\;\; \text{otherwise.}
\end{cases}
\]
\end{lemma}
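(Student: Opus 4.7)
The plan is to reduce the triangle inequality for $d_c$ to the ordinary triangle inequality on $\mathbb{R}$, applied entrywise to the comparability matrices. Since $d_c(A,B)$ is essentially a normalized $\ell_1$-distance between the $0/1$ matrices $M^A$ and $M^B$, and the $\ell_1$-distance is well-known to be a metric, the work is mostly a matter of writing the inequality term by term.

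First I would observe that for any three real numbers $x,y,z$,
\[
|x-y| \le |x-z| + |z-y|,
\]
and apply this with $x = M^A_{ij}$, $y = M^B_{ij}$, $z = M^C_{ij}$ for each pair $(i,j)$ with $1 \le i,j \le n$. This gives
\[
|M^A_{ij}-M^B_{ij}| \le |M^A_{ij}-M^C_{ij}| + |M^C_{ij}-M^B_{ij}|.
\]
Next I would sum over all indices $(i,j)$; since the inequality holds termwise, the summed inequality holds as well:
\[
\sum_{i=1}^{n}\sum_{j=1}^{n} |M^A_{ij}-M^B_{ij}| \le \sum_{i=1}^{n}\sum_{j=1}^{n} |M^A_{ij}-M^C_{ij}| + \sum_{i=1}^{n}\sum_{j=1}^{n}|M^C_{ij}-M^B_{ij}|.
\]
Finally I would divide both sides by the positive quantity $n^2 - n$ to obtain $d_c(A,B) \le d_c(A,C) + d_c(C,B)$, as required.

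One small point worth mentioning is that the normalization $n^2 - n$ (rather than $n^2$) is consistent here because reflexivity forces $M^k_{ii} = 1$ for every poset $P_k$, so the diagonal contributes zero to each of the three sums; this does not affect the inequality, since we are adding zero to both sides. I do not expect any genuine obstacle: the argument is entirely a pointwise application of the scalar triangle inequality followed by summation and normalization, which is why the authors describe this property as the only nontrivial one among the four metric axioms yet still essentially routine.
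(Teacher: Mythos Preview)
Your proof is correct and follows essentially the same route as the paper: both apply the scalar triangle inequality $|x-y|\le|x-z|+|z-y|$ entrywise to the matrices $M^A,M^B,M^C$, sum over all $(i,j)$, and divide by $n^2-n$. Your additional remark about the diagonal entries vanishing by reflexivity is a nice clarification that the paper omits, but otherwise the arguments coincide.
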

\begin{proof}
\begin{align*}
d_c(P_f,P_h)&=\dfrac{\sum\limits_{i=1}^{n}\sum\limits_{j=1}^{n} \vert M^f_{ij}-M^h_{ij} \vert}{n^2-n} \\&= \dfrac{\sum\limits_{i=1}^{n}\sum\limits_{j=1}^{n} \vert (M^f_{ij}-M^g_{ij})+(M^g_{ij}-M^h_{ij})\vert}{n^2-n}
\\&\leq \dfrac{\sum\limits_{i=1}^{n}\sum\limits_{j=1}^{n} (\vert M^f_{ij}-M^g_{ij}\vert +\vert M^g_{ij}-M^h_{ij}\vert)}{n^2-n}
\\&= \dfrac{\sum\limits_{i=1}^{n}\sum\limits_{j=1}^{n} \vert M^f_{ij}-M^g_{ij} \vert}{n^2-n} + \dfrac{\sum\limits_{i=1}^{n}\sum\limits_{j=1}^{n} \vert M^g_{ij}-M^h_{ij} \vert}{n^2-n}\\&=d_c(P_f,P_g)+d_c(P_g,P_h).
\end{align*}
\end{proof}
\begin{lemma}
\label{lm:beta-influence-containment}
For $\beta'>\beta\geq 1$ and $a,b\in \mathbb{R}^2$, $S_{\beta}(a,b)\subseteq S_{\beta'}(a,b)$, where $S_{\beta}(a,b)$ is the intersection of two disks $B(C_{ab\beta},R_{ab\beta})$ and $B(C_{ba\beta},R_{ba\beta})$, $C_{ab\beta}=(\beta/2)(a-b)+b$, and  $R_{ab\beta}=(\beta/2)d(a,b)$.
\end{lemma}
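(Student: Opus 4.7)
The plan is to reduce each of the two ball-membership conditions defining $S_\beta(a,b)$ to a single scalar inequality that turns out to be monotone in $\beta$, and then handle the two conditions separately.

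First, I would expand $C_{ab\beta}-q = (b-q)+(\beta/2)(a-b)$ so that
\[
\|C_{ab\beta}-q\|^2 = \|b-q\|^2 + \beta\,(b-q)\cdot(a-b) + (\beta^2/4)\|a-b\|^2.
\]
Since $R_{ab\beta}^2 = (\beta^2/4)\|a-b\|^2$, the $\beta^2$-terms cancel and the condition $q\in B(C_{ab\beta},R_{ab\beta})$ collapses to
\[
(\star)\qquad \|b-q\|^2 + \beta\,(b-q)\cdot(a-b)\le 0.
\]
A symmetric expansion of $\|C_{ba\beta}-q\|^2\le R_{ba\beta}^2$ gives
\[
(\star\star)\qquad \|a-q\|^2 + \beta\,(a-q)\cdot(b-a)\le 0,
\]
so $q\in S_\beta(a,b)$ iff both $(\star)$ and $(\star\star)$ hold.

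Next I would read off a sign condition from each inequality. From $(\star)$ one gets $\beta\,(b-q)\cdot(a-b) \le -\|b-q\|^2 \le 0$, hence $(b-q)\cdot(a-b)\le 0$; analogously $(\star\star)$ forces $(a-q)\cdot(b-a)\le 0$. With these signs pinned down, replacing $\beta$ by any $\beta'>\beta\ge 1$ can only decrease the $\beta$-linear term on the left of each inequality, because we are multiplying a non-positive quantity by a larger positive number. Therefore both $(\star)$ and $(\star\star)$ continue to hold at $\beta'$, and $q\in S_{\beta'}(a,b)$.

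The hard part, such as it is, is recognizing the right rewriting of the ball centers so that the $\beta^2$-terms cancel; once that is done, the whole argument reduces to a one-line monotonicity observation. The only book-keeping is the degenerate cases $q=a$ or $q=b$, in which one of the inner products is zero and the corresponding inequality reduces to $0\le 0$, trivially preserved under any change of $\beta$.
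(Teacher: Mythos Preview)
Your proof is correct and follows essentially the same approach as the paper: both arguments expand $\lVert C_{ab\beta}-q\rVert^2$ so that the $\beta^2$-terms cancel, reduce ball membership to the scalar inequality $\lVert b-q\rVert^2-\beta\,(b-a)\cdot(b-q)\le 0$, read off the sign of the inner product, and observe monotonicity in $\beta$. The only cosmetic difference is that the paper restricts to boundary points $t\neq b$ and shows strict containment, whereas you work with the closed inequality for arbitrary $q$; your formulation is slightly cleaner and avoids the separate boundary/interior discussion.
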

\begin{proof}
To prove that $B(C_{ab\beta},R_{ab\beta}) \cap B(C_{ba\beta},R_{ba\beta})$ is a subset of $B(C_{ab\beta'},R_{ab\beta'})\cap B(C_{ba\beta'},R_{ba\beta'})$, it is enough to prove $B(C_{ab\beta},R_{ab\beta})\subseteq B(C_{ab\beta'},R_{ab\beta'})$ and $B(C_{ba\beta},R_{ba\beta})\subseteq B(C_{ba\beta'},R_{ba\beta'})$. We only prove the first one, and the second one can be proved similarly.
Suppose that $\beta<\beta'=\beta+\varepsilon; \varepsilon>0$. It is trivial to check that two disks $B(C_{ab\beta},R_{ab\beta})$ and $B(C_{ab\beta'},R_{ab\beta'})$ meet at $b$. See Figure~\ref{fig:two-beta-values}. Let $t\neq b$ be an extreme point of $B(C_{ab\beta},R_{ab\beta})$. This means that

\begin{align*}
d(t,C_{ab\beta})&=R_{ab\beta} \Leftrightarrow d(t,\frac{\beta(a-b)}{2}+b)=\frac{\beta(d(a,b))}{2}\\&\Leftrightarrow \left\vert \frac{\beta(a-b)}{2}+(b-t)\right\vert^2=\left(\frac{\beta(a-b)}{2}\right)^2\\&
\Leftrightarrow \vert b-t\vert^2-\beta(b-a)\cdot (b-t)=0
\end{align*}
This means $(b-a)\cdot (b-t)\geq 0$. Hence,
\begin{align*}
&\vert b-t\vert^2-\beta(b-a)\cdot (b-t)=0 \Leftrightarrow\\
&\vert b-t\vert^2-(\beta+\varepsilon)(b-a)\cdot (b-t)<0 \Leftrightarrow\\
&\vert b-t\vert^2-\beta'(b-a)\cdot (b-t)<0 \Leftrightarrow\\
&d(t,C_{ab\beta'})-R_{ab\beta'}<0 
\end{align*}
The last inequality means that $t$ is an interior point of $B(C_{ab\beta'},R_{ab\beta'}).$ 
\end{proof}

\begin{figure}[!ht]
  \centering
    \includegraphics[width=0.7\textwidth]{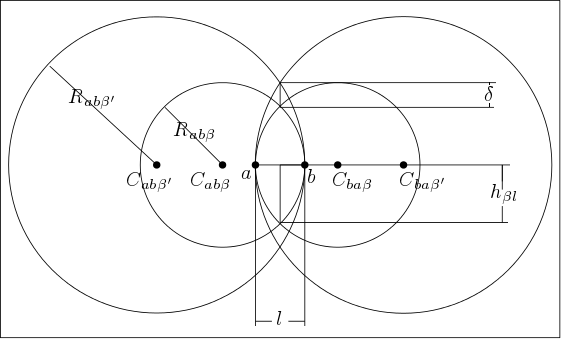}
  \caption{$S_{\beta}(a,b)$ and $S_{\beta'}(a,b)$ .}
  \label{fig:two-beta-values}
\end{figure}

\begin{lemma}
\label{lm:lim-delta}
For $a,b\in \mathbb{R}^2$, 
\begin{equation}
\label{eq:lim-delta}
\lim_{\beta \to \infty}\delta= \lim_{\beta \to \infty}(h_{(\beta+1)\l}-h_{\beta\l})=0,
\end{equation}
where $h_{\beta\l}=(\l/2)\sqrt{(2\beta-1)}$. See Figure \ref{fig:two-beta-values}.

\begin{proof}
Instead of proving \eqref{eq:lim-delta}, we prove its equivalent form as follows:
\begin{align*}
\lim_{\beta \to \infty}\frac{h_{(\beta+1)\l}}{h_{\beta\l}}= \lim_{\beta \to \infty}\frac{(\l/2)\sqrt{(2\beta+1)}}{(\l/2)\sqrt{(2\beta-1)}}= \lim_{\beta \to \infty}\frac{\sqrt{(2\beta+1)}}{\sqrt{(2\beta-1)}}=1.
\end{align*}
\end{proof}
\end{lemma}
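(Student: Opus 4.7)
The plan is to establish the claimed limit by a direct computation using the standard conjugate trick for differences of square roots. The quantity in question is
\[
\delta \;=\; h_{(\beta+1)\l}-h_{\beta\l} \;=\; \frac{\l}{2}\bigl(\sqrt{2\beta+1}-\sqrt{2\beta-1}\bigr),
\]
which has the indeterminate form $\infty-\infty$ as $\beta\to\infty$, so some rewriting is needed before the limit can be read off.

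First, I would multiply and divide $\delta$ by the conjugate $\sqrt{2\beta+1}+\sqrt{2\beta-1}$. The numerator of the resulting fraction collapses to $(2\beta+1)-(2\beta-1)=2$, giving
\[
\delta \;=\; \frac{\l}{\sqrt{2\beta+1}+\sqrt{2\beta-1}}.
\]
Second, since $\sqrt{2\beta+1}+\sqrt{2\beta-1}\to\infty$ as $\beta\to\infty$ while the numerator $\l$ is a fixed constant depending only on $a$ and $b$, the right-hand side tends to $0$, yielding $\delta\to 0$ as required.

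I do not expect any real obstacle here; the argument is essentially a one-liner once the rationalization is in place. The one subtlety worth flagging is that it is tempting to reduce the claim $\delta\to 0$ to the easier-looking statement $h_{(\beta+1)\l}/h_{\beta\l}\to 1$, but these two assertions are not equivalent for sequences that individually tend to infinity. For example, $u_\beta=\beta^2$ and $v_\beta=\beta^2+\beta$ satisfy $v_\beta/u_\beta\to 1$ while $v_\beta-u_\beta\to\infty$. For this reason the proof should operate directly on the difference in rationalized form, rather than passing through the ratio, in order to obtain the desired difference-limit unambiguously.
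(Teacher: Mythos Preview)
Your proof is correct and complete: the conjugate rationalization reduces $\delta$ to $\l/(\sqrt{2\beta+1}+\sqrt{2\beta-1})$, which manifestly tends to $0$.

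Your approach differs from the paper's. The paper does not rationalize the difference; instead it asserts that showing $\lim_{\beta\to\infty} h_{(\beta+1)\l}/h_{\beta\l}=1$ is an ``equivalent form'' of $\lim_{\beta\to\infty}(h_{(\beta+1)\l}-h_{\beta\l})=0$, and then computes that ratio. You are right to flag that this implication is not valid in general for diverging sequences: your example $u_\beta=\beta^2$, $v_\beta=\beta^2+\beta$ shows that $v_\beta/u_\beta\to 1$ does not force $v_\beta-u_\beta\to 0$. So the paper's argument, as written, has a genuine gap at exactly the point you identify, and your direct computation on the difference is the cleaner and logically sound route. (Of course, in this particular instance the conclusion is still true, since $h_{\beta\l}\sim (\l/2)\sqrt{2\beta}$ grows only like $\sqrt{\beta}$ and the ratio converges fast enough; but that extra reasoning is missing from the paper, whereas your rationalization avoids the issue entirely.)
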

\begin{lemma}
\label{lm:halfspace-circle-center}
For $k=\beta/(2(\beta-1))$ and $a,b\in \mathbb{R}^2$ ($a$ is fixed and $b$ is arbitrary), halfspace $\overrightarrow{a}.\overrightarrow{b}\leq (1/2k)\Vert a \Vert ^2$ passes through the center of disk $(b-ka)^2\geq(k\Vert a \Vert)^2$ if $\beta=2+\sqrt{2}$. 
\end{lemma}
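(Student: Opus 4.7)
The plan is to verify directly that the boundary hyperplane of the halfspace passes through the center of the disk exactly when $\beta=2+\sqrt{2}$. I would start by reading off the two geometric objects from their algebraic descriptions. The boundary of the halfspace $\overrightarrow{a}\cdot\overrightarrow{b}\leq (1/2k)\|a\|^2$ is the line
\[
\{b\in\mathbb{R}^2 : \overrightarrow{a}\cdot\overrightarrow{b}=\tfrac{1}{2k}\|a\|^2\},
\]
and the disk $(b-ka)^2\geq(k\|a\|)^2$ has center $b_0=ka$. The statement then reduces to the single scalar equation obtained by substituting $b_0=ka$ into the boundary equation.

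Second, I would perform that substitution and simplify. Plugging $b_0=ka$ into $\overrightarrow{a}\cdot\overrightarrow{b}= (1/2k)\|a\|^2$ gives $k\|a\|^2=(1/2k)\|a\|^2$. Since $a$ is a fixed non-zero point (as in the surrounding definitions), $\|a\|^2\neq 0$, and the equation collapses to the purely numerical condition $2k^2=1$, i.e.\ $k=1/\sqrt{2}$ (the positive root is the one relevant here because $\beta\geq 1$ forces $k>0$).

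Third, I would translate this back to $\beta$ using the given relation $k=\beta/(2(\beta-1))$. Setting $\beta/(2(\beta-1))=1/\sqrt{2}$ and clearing denominators yields $\beta\sqrt{2}=2(\beta-1)$, hence $\beta(2-\sqrt{2})=2$ and
\[
\beta=\frac{2}{2-\sqrt{2}}=\frac{2(2+\sqrt{2})}{(2-\sqrt{2})(2+\sqrt{2})}=\frac{2(2+\sqrt{2})}{2}=2+\sqrt{2},
\]
which is the claimed value. The converse direction (that $\beta=2+\sqrt{2}$ implies the boundary passes through the center) is obtained by running the same computation backwards, so no separate argument is needed.

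There is no real obstacle here: the statement is essentially a one-line algebraic check once one correctly identifies the hyperplane and the disk center. The only thing to be careful about is selecting the correct root of $2k^2=1$ and noting that the case $a=0$ is implicitly excluded, consistent with the earlier definition of $p(a,\beta)$ and $c(a,\beta)$ for non-zero $a$.
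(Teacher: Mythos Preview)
Your proposal is correct and follows essentially the same route as the paper: substitute the disk center $b=ka$ into the boundary of the halfspace, reduce to $2k^2=1$, and solve for $\beta$. The only cosmetic difference is that the paper plugs $k=\beta/(2(\beta-1))$ in first and solves the resulting quadratic $-\beta^2+4\beta-2=0$ (discarding the root $2-\sqrt{2}<1$), whereas you solve for $k=1/\sqrt{2}$ first and then recover $\beta$ from a linear equation.
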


\begin{proof}
It is enough to substitute $b$ in the given halfspace with $ka$ which is the center of the given disk.
\begin{align*}
&\overrightarrow{a}.\overrightarrow{(ka)}\leq \frac{1}{2k}\Vert a \Vert ^2 \Rightarrow k\Vert a \Vert^2 \leq \frac{1}{2k}\Vert a \Vert^2 \Rightarrow 2k^2\leq 1 \Rightarrow\\&2(\frac{\beta}{2(\beta-1)})^2 \leq 1 \Rightarrow -\beta^2+4\beta-2\leq 0 \Rightarrow \beta =2\pm \sqrt{2}
\end{align*}
Since $\beta\geq 1$, the $\beta=2+\sqrt{2}$ is valid. 
\end{proof}

\begin{table*}[!ht]
\begin{center}
\caption{Summary of experimental results}
\label{tbl:results-fitting-function}
\begin{tabular}{|l|l|l||l|l|l|l|}
\hline
x&$\HD=f(x)$&$d_E(x,\HD)$&Figure&$f(x,\SkD_{\infty})$&$d_E(x,\SkD_{\infty})$&Figure\\
\hline\hline
$\SkD_{1}$&$3.103x^2-0.013x+0.013$&$0.0016$&\ref{fig:1-h}&$0.942x+0.288$&$0.007$&\ref{fig:1-inf}\\
\hline
$\SkD_{2}$&$2.71x^2-0.48x+0.04$&$0.0019$&\ref{fig:2-h}&$0.858x+0.23$&$0.002$&\ref{fig:2-inf}\\
\hline
$\SkD_{3}$&$2.92x^2-0.82x+0.08$&$0.0021$&\ref{fig:3-h}&$0.86x+0.192$&$0.001$&\ref{fig:3-inf}\\
\hline
$\SkD_{1000}$&$4.26x^2-2.76x+0.47$&$0.0031$&\ref{fig:1000-h}&$0.999x+0.002$&$0.0$&\ref{fig:1000-inf}\\
\hline
$\SkD_{10000}$&$4.27x^2-2.78x+0.47$&$0.0031$&\ref{fig:10000-h}&$1.0x+0.0$&$0.0$&\ref{fig:10000-inf}\\
\hline
\end{tabular}
\end{center}
\end{table*}
\newpage
\begin{figure}[!ht]
  \centering
    \includegraphics[width=0.7\textwidth]{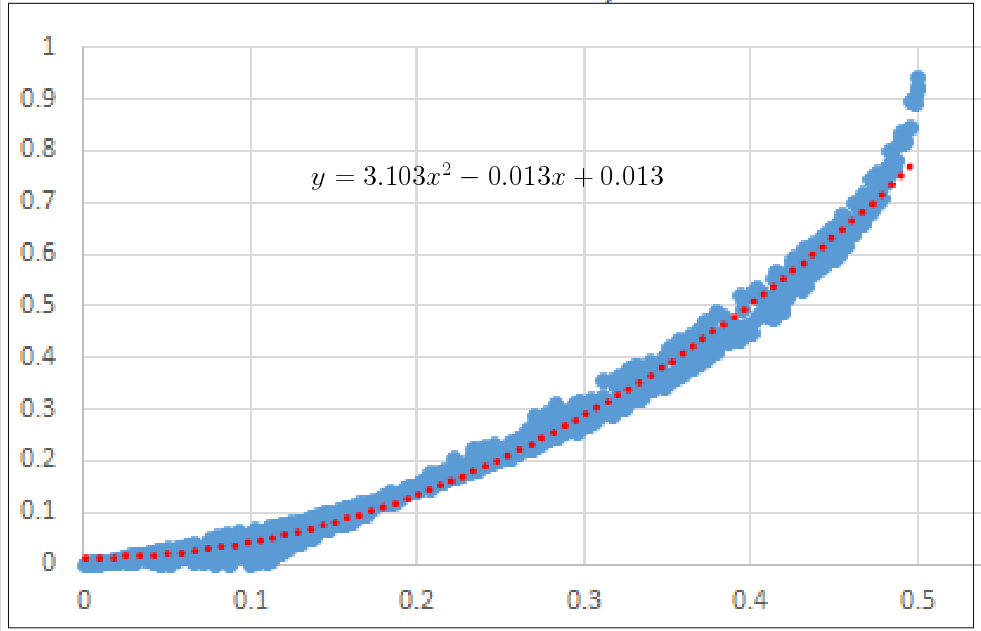}
  \caption{$\HD$ verses $\SkD_1$.}
  \label{fig:1-h}
\end{figure}

\begin{figure}[!ht]
  \centering
    \includegraphics[width=0.7\textwidth]{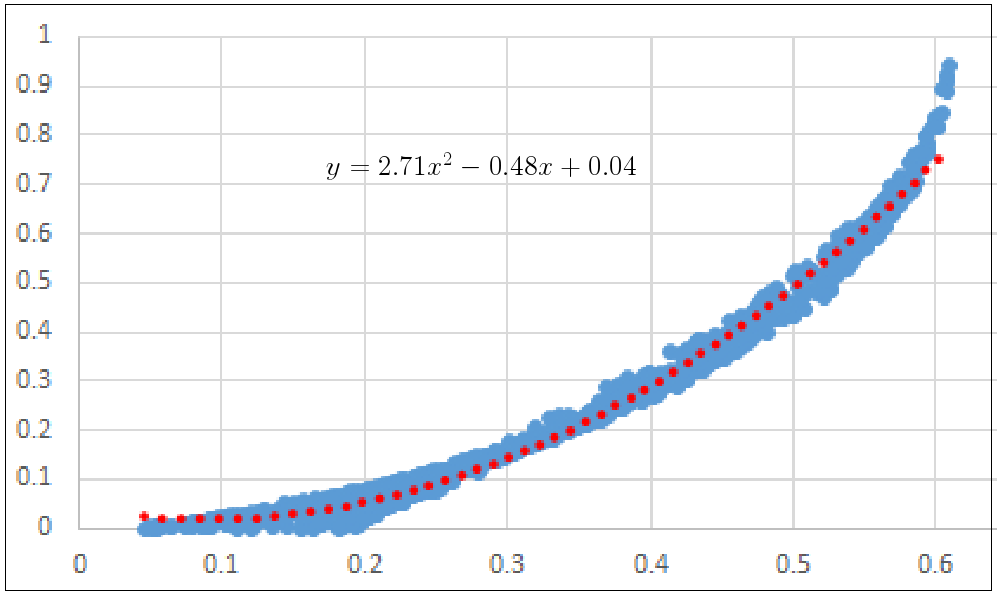}
  \caption{$\HD$ verses $\SkD_2$.}
  \label{fig:2-h}
\end{figure}

\begin{figure}[!ht]
  \centering
    \includegraphics[width=0.7\textwidth]{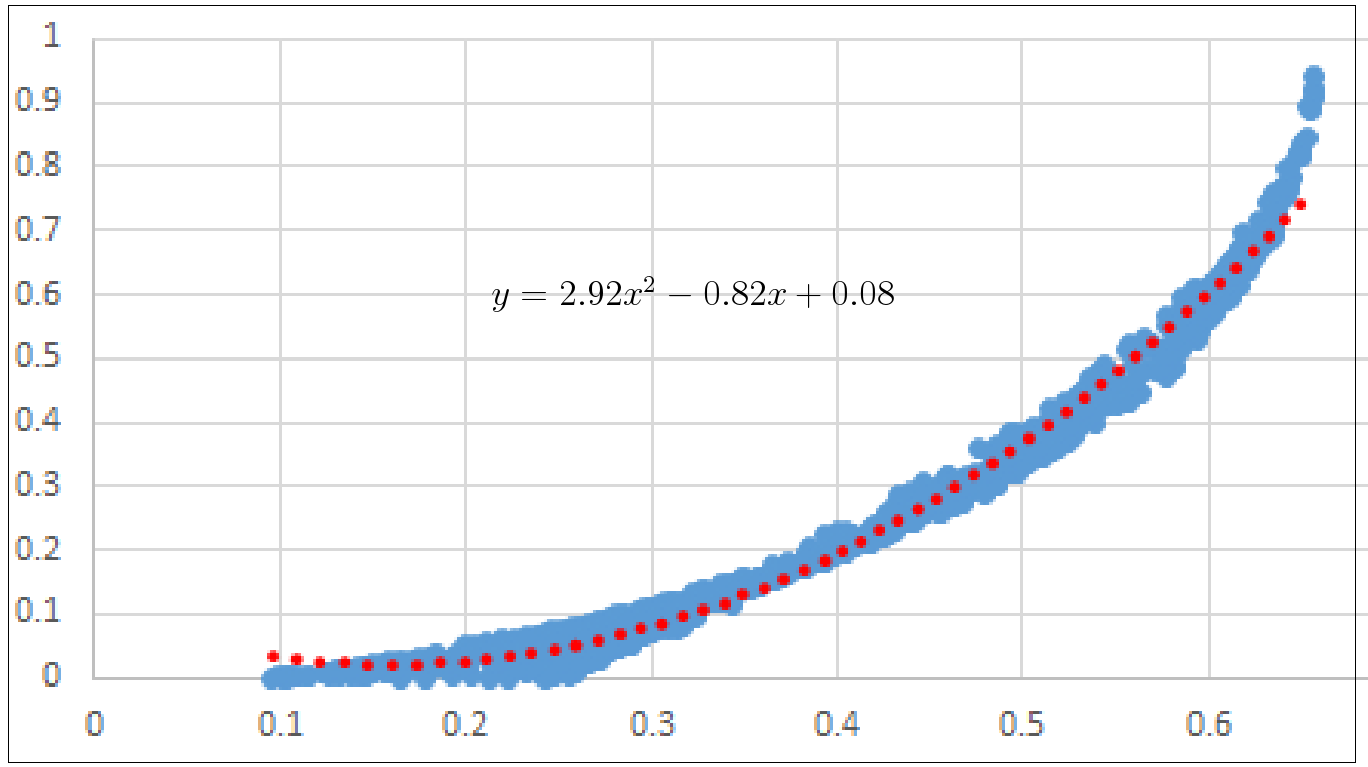}
  \caption{$\HD$ verses $\SkD_3$.}
  \label{fig:3-h}
\end{figure}

\begin{figure}[!ht]
  \centering
    \includegraphics[width=0.7\textwidth]{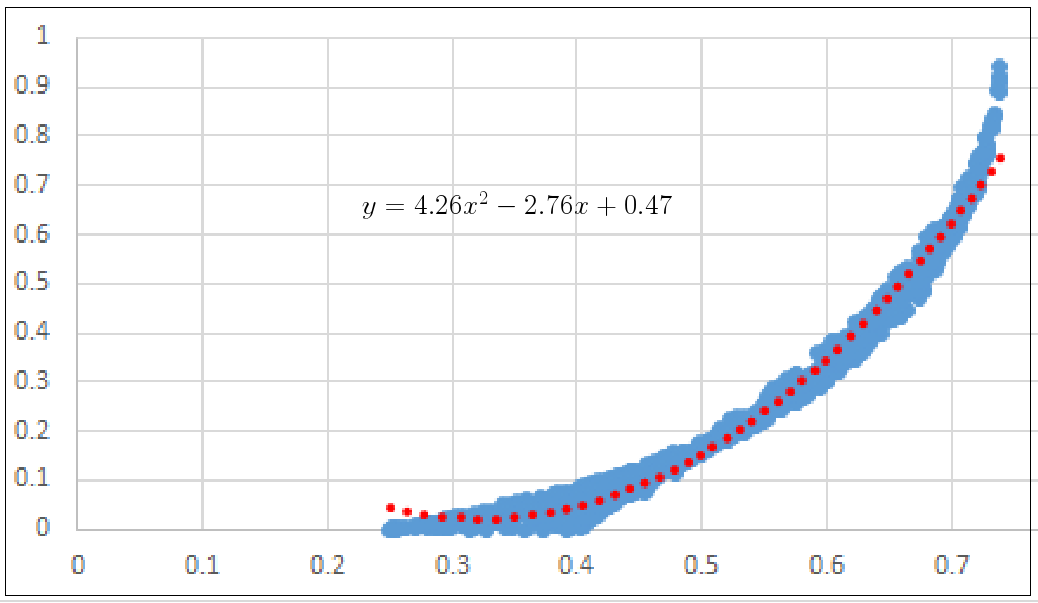}
  \caption{$\HD$ verses $\SkD_{1000}$.}
  \label{fig:1000-h}
\end{figure}

\begin{figure}[!ht]
  \centering
    \includegraphics[width=0.7\textwidth]{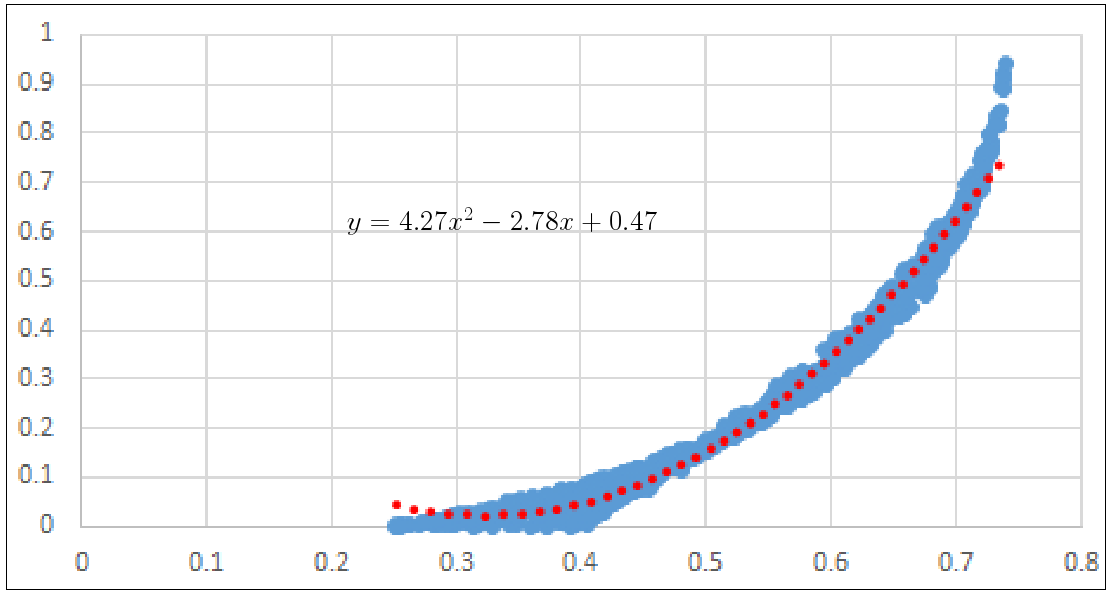}
  \caption{$\HD$ verses $\SkD_{10000}$.}
  \label{fig:10000-h}
\end{figure}

\begin{figure}[!ht]
  \centering
    \includegraphics[width=0.7\textwidth]{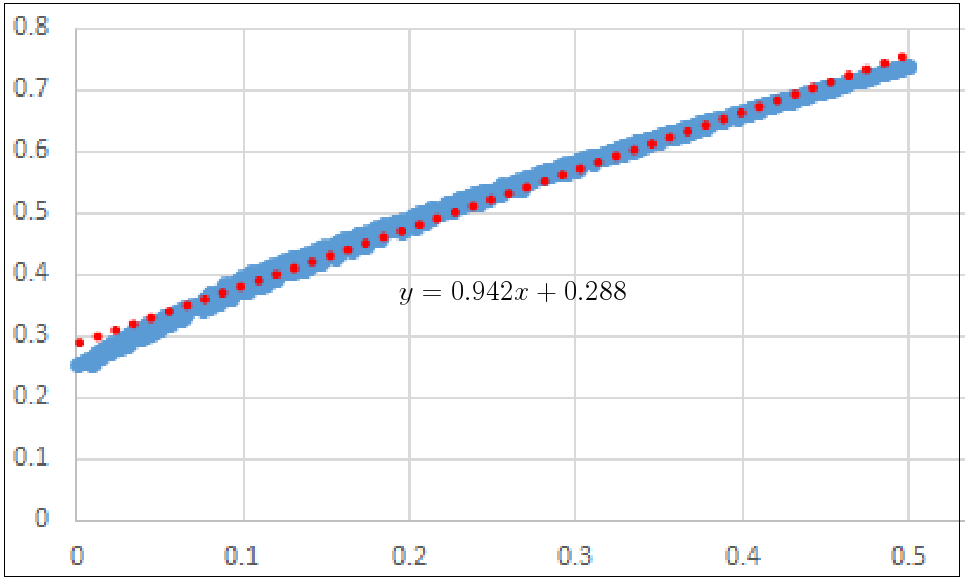}
  \caption{$\SkD_{\infty}$ verses $\SkD_1$.}
  \label{fig:1-inf}
\end{figure}

\begin{figure}[!ht]
  \centering
    \includegraphics[width=0.7\textwidth]{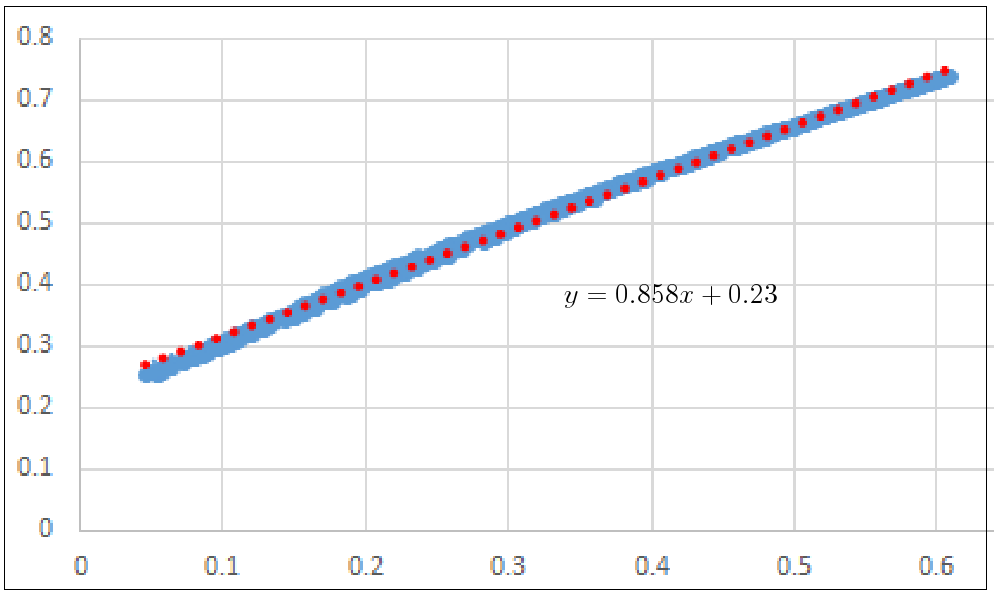}
  \caption{$\SkD_{\infty}$ verses $\SkD_2$.}
  \label{fig:2-inf}
\end{figure}

\begin{figure}[!ht]
  \centering
    \includegraphics[width=0.7\textwidth]{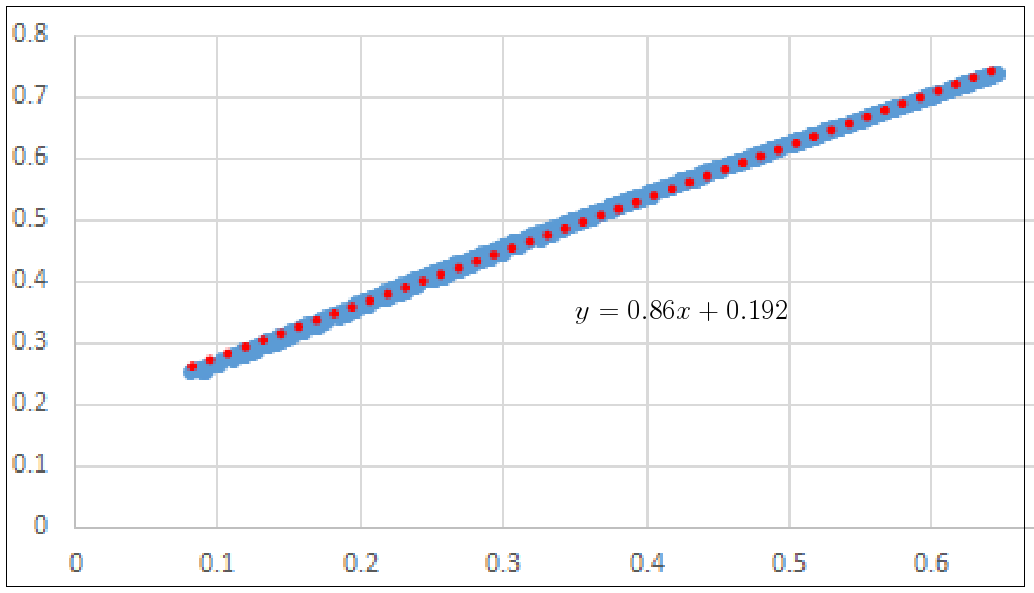}
  \caption{$\SkD_{\infty}$ verses $\SkD_3$.}
  \label{fig:3-inf}
\end{figure}

\begin{figure}[!ht]
  \centering
    \includegraphics[width=0.7\textwidth]{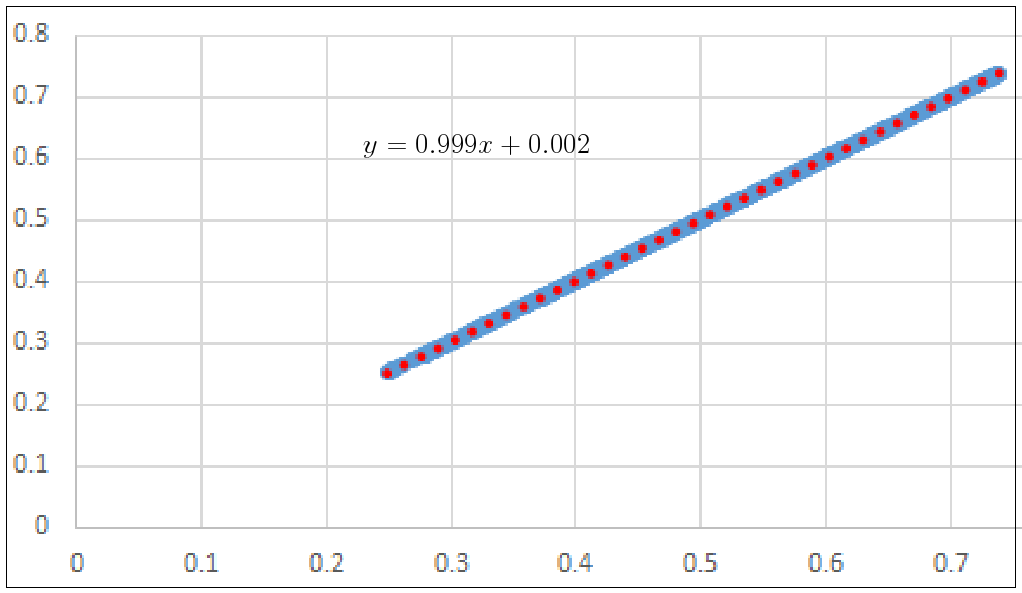}
  \caption{$\SkD_{\infty}$ verses $\SkD_{1000}$.}
  \label{fig:1000-inf}
\end{figure}

\begin{figure}[!ht]
  \centering
    \includegraphics[width=0.7\textwidth]{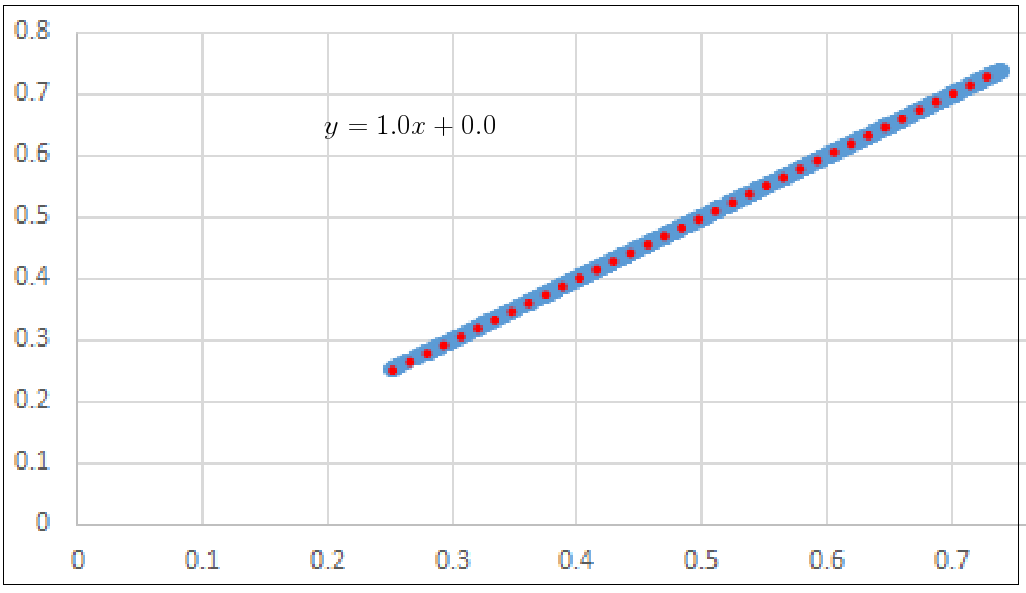}
  \caption{$\SkD_{\infty}$ verses $\SkD_{10000}$.}
  \label{fig:10000-inf}
\end{figure}

\begin{thebibliography}{9}

\bibitem{afshani2009approximate}
Peyman Afshani and Timothy~M Chan.
\newblock On approximate range counting and depth.
\newblock {\em Discrete \& Computational Geometry}, 42(1):3--21, 2009.


\bibitem{aloupis2006geometric}
Greg Aloupis.
\newblock Geometric measures of data depth.
\newblock {\em DIMACS series in discrete mathematics and theoretical computer
  science}, 72:147, 2006.


\bibitem{aronov2008approximating}
Boris Aronov and Sariel Har-Peled.
\newblock On approximating the depth and related problems.
\newblock {\em SIAM Journal on Computing}, 38(3):899--921, 2008.

\bibitem{aronov2010approximate}
Boris Aronov and Micha Sharir.
\newblock Approximate halfspace range counting.
\newblock {\em SIAM Journal on Computing}, 39(7):2704--2725, 2010.


\bibitem{bremner2008output}
David Bremner, Dan Chen, John Iacono, Stefan Langerman, and Pat Morin.
\newblock Output-sensitive algorithms for tukey depth and related problems.
\newblock {\em Statistics and Computing}, 18(3):259--266, 2008.

\bibitem{bremner2006primal}
David Bremner, Komei Fukuda, and Vera Rosta.
\newblock Primal-dual algorithms for data depth.
\newblock {\em DIMACS series in discrete mathematics and theoretical computer
  science}, 72:147, 2006.

\bibitem{chan2004optimal}
Timothy~M Chan.
\newblock An optimal randomized algorithm for maximum tukey depth.
\newblock In {\em Proceedings of the fifteenth annual ACM-SIAM symposium on
  Discrete algorithms}, pages 430--436. Society for Industrial and Applied
  Mathematics, 2004.


\bibitem{chen2013absolute}
Dan Chen, Pat Morin, and Uli Wagner.
\newblock Absolute approximation of tukey depth: Theory and experiments.
\newblock {\em Computational Geometry}, 46(5):566--573, 2013.


\bibitem{donoho1992breakdown}
David~L Donoho and Miriam Gasko.
\newblock Breakdown properties of location estimates based on halfspace depth
  and projected outlyingness.
\newblock {\em The Annals of Statistics}, pages 1803--1827, 1992.

\bibitem{elmore2006spherical}
Ryan~T Elmore, Thomas~P Hettmansperger, and Fengjuan Xuan.
\newblock Spherical data depth and a multivariate median.
\newblock {\em DIMACS Series in Discrete Mathematics and Theoretical Computer
  Science}, 72:87, 2006.

\bibitem{fattore2014measuring}
Marco Fattore, Rosanna Grassi, and Alberto Arcagni.
\newblock Measuring structural dissimilarity between finite partial orders.
\newblock In {\em Multi-indicator Systems and Modelling in Partial Order},
  pages 69--84. Springer, 2014.


\bibitem{har2011geometric}
Sariel Har-Peled.
\newblock {\em Geometric approximation algorithms}.
\newblock Number 173. American Mathematical Soc., 2011.

\bibitem{har2011relative}
Sariel Har-Peled and Micha Sharir.
\newblock Relative (p, $\varepsilon$)-approximations in geometry.
\newblock {\em Discrete \& Computational Geometry}, 45(3):462--496, 2011.

\bibitem{hotelling1990stability}
Harold Hotelling.
\newblock Stability in competition.
\newblock In {\em The Collected Economics Articles of Harold Hotelling}, pages
  50--63. Springer, 1990.

\bibitem{johnson1978densest}
David~S. Johnson and Franco~P Preparata.
\newblock The densest hemisphere problem.
\newblock {\em Theoretical Computer Science}, 6(1):93--107, 1978.

\bibitem{kirkpatrick1985framework}
David~G Kirkpatrick and John~D Radke.
\newblock A framework for computational morphology.
\newblock In {\em Machine Intelligence and Pattern Recognition}, volume~2,
  pages 217--248. Elsevier, 1985.

\bibitem{liu2006data}
Regina~Y Liu.
\newblock {\em Data depth: robust multivariate analysis, computational
  geometry, and applications}, volume~72.
\newblock American Mathematical Soc., 2006.

\bibitem{liu1990notion}
Regina~Y Liu et~al.
\newblock On a notion of data depth based on random simplices.
\newblock {\em The Annals of Statistics}, 18(1):405--414, 1990.

\bibitem{liu2011lens}
Zhenyu Liu and Reza Modarres.
\newblock Lens data depth and median.
\newblock {\em Journal of Nonparametric Statistics}, 23(4):1063--1074, 2011.


\bibitem{oja1983descriptive}
Hannu Oja.
\newblock Descriptive statistics for multivariate distributions.
\newblock {\em Statistics \& Probability Letters}, 1(6):327--332, 1983.

\bibitem{rousseeuw1999regression}
Peter~J Rousseeuw and Mia Hubert.
\newblock Regression depth.
\newblock {\em Journal of the American Statistical Association},
  94(446):388--402, 1999.

\bibitem{rousseeuw1996algorithm}
Peter~J Rousseeuw and Ida Ruts.
\newblock Algorithm as 307: Bivariate location depth.
\newblock {\em Journal of the Royal Statistical Society. Series C (Applied
  Statistics)}, 45(4):516--526, 1996.

\bibitem{rousseeuw1998computing}
Peter~J Rousseeuw and Anja Struyf.
\newblock Computing location depth and regression depth in higher dimensions.
\newblock {\em Statistics and Computing}, 8(3):193--203, 1998.

\bibitem{shafer2012beginning}
Douglas~S Shafer and Z~Zhang.
\newblock Beginning statistics.
\newblock {\em Phylis-Barnidge publisher}, 304, 2012.

\bibitem{shahsavarifar2018computing}
Rasoul Shahsavarifar and David Bremner.
\newblock Computing the planar $\beta$-skeleton depth.
\newblock {\em arXiv preprint arXiv:1803.05970}, 2018.

\bibitem{shaul2011range}
Hayim Shaul.
\newblock {\em Range Searching: Emptiness, Reporting, and Approximate
  Counting}.
\newblock University of Tel-Aviv, 2011.

\bibitem{small1990survey}
Christopher~G Small.
\newblock A survey of multidimensional medians.
\newblock {\em International Statistical Review/Revue Internationale de
  Statistique}, pages 263--277, 1990.

\bibitem{struyf1999halfspace}
Anja~J Struyf and Peter~J Rousseeuw.
\newblock Halfspace depth and regression depth characterize the empirical
  distribution.
\newblock {\em Journal of Multivariate Analysis}, 69(1):135--153, 1999.


\bibitem{tukey1975mathematics}
John~W Tukey.
\newblock Mathematics and the picturing of data.
\newblock In {\em Proceedings of the international congress of mathematicians},
  volume~2, pages 523--531, 1975.

\bibitem{yang2014depth}
Mengta Yang.
\newblock {\em Depth Functions, Multidimensional Medians and Tests of
  Uniformity on Proximity Graphs}.
\newblock PhD thesis, The George Washington University, 2014.

\bibitem{yang2017beta}
Mengta Yang and Reza Modarres.
\newblock $\beta$-skeleton depth functions and medians.
\newblock {\em Communications in Statistics-Theory and Methods}, pages 1--17,
  2017.


\end{thebibliography}
\end{document}